\newtheorem{theorem}{Theorem}
\newtheorem{dfn}{Definition}
\newcommand{\be}{\begin{equation}}
\newcommand{\ee}{\end{equation}}
\newcommand{\ba}{\begin{eqnarray}}
\newcommand{\ea}{\end{eqnarray}}
\newcommand{\ve}{\varepsilon}
\newcommand{\CB}{{\cal B}}
\newcommand{\CY}{{\cal Y}}
\newcommand{\CM}{{\cal M}}
\newcommand{\W}{ {\cal W}}
\newcommand{\E}{\mathbb{E}}
\renewcommand{\H}{\mathcal{H}}
\newcommand{\A}{\mathcal{A}}
\newcommand{\R}{\mathbb{R}}
\renewcommand{\O}{\mathcal{O}}
\newcommand{\floor}[1]{\lfloor #1 \rfloor}
\newcommand{\Av}[1]{\left\langle #1\right\rangle}
\newcommand{\Var}{{\rm Var}}
\newcommand{\Cov}{{\rm Cov}}
\begin{document}

\maketitle

\title{ Spectral statistics of the uni-modular ensemble}
\author{Christopher H. Joyner$^{1}$, Uzy Smilansky$^{2}$ and Hans A. Weidenm{\"u}ller$^{3}$}
\address {$^1$School of Mathematical Sciences, Queen Mary University of London, London, E1 4NS, UK}
\address {$^2$Department of Physics of Complex Systems,
Weizmann Institute of Science, Rehovot 7610001, Israel}
\address{$^3$Max-Planck-Institut f{\"u}r Kernphysik, Heidelberg , Germany}
\ead{\mailto{c.joyner@qmul.ac.uk}, \mailto{uzy.smilansky@weizmann.ac.il}, \mailto{haw@mpi-hd.mpg.de}}
\date{\today}

\begin{abstract}
We investigate the spectral statistics of Hermitian matrices in which
the elements are chosen uniformly from $U(1)$, called the uni-modular
ensemble (UME), in the limit of large matrix size. Using three
complimentary methods; a supersymmetric integration method, a
combinatorial graph-theoretical analysis and a Brownian motion
approach, we are able to derive expressions for $1 / N$ corrections to
the mean spectral moments and also analyse the fluctuations about this
mean. By addressing the same ensemble from three different point of
view, we can critically compare their relative advantages and derive
some new results.
\end{abstract}

\section{Introduction}
\label{intro}

In this work we investigate the spectral statistics of the uni-modular
ensemble of random matrices (UME) in the limit of large matrix
dimension $N$. The ensemble is defined as the set of Hermitian
matrices $\CM = \{ M_{\mu \nu} \}$, where $\mu, \nu = 1, 2, \ldots,
N$, with elements of the form
\be
\label{UME Definition}
M_{\mu\nu} = (1 - \delta_{\mu \nu}) \exp \{ i \phi_{\mu \nu} \} \ \
{\rm and } \ \ \phi_{\mu \nu} = - \phi_{\nu \mu},
\label{1}
\ee
and where, except for the symmetry relation, the phases $\phi_{\mu
  \nu}$ are uncorrelated real random variables distributed uniformly
in the interval $[0, 2 \pi)$. In contrast, the Gaussian unitary
  ensemble (GUE) is given by the set of Hermitian matrices $\H =
  \{H_{\mu\nu}\}$ of dimension $N$ endowed with the probability
  distribution
\begin{equation} \label{GUE definition}
P(\H) = \frac{1}{Z_N} \exp \left\{ - \frac{1}{2} {\rm Tr} (\H^2)
\right\}
\end{equation}
where $Z_N$ is a normalization factor. The UME serves as a
paradigmatic example of a Wigner
ensemble, i.e. a set of random Hermitian
matrices with independently distributed elements that do not follow a
Gaussian distribution (see e.g. \cite{Anderson-2009} for
details). Wigner~\cite{Wigner-1955,Wigner-1958} was the first to show
that many spectral properties of Wigner ensembles coincide with those
of the GUE in the limit of large $N$. Little is known, however, about
the $1/N$ deviations of particular Wigner ensembles from the universal
limit.

We address this question for the UME. We have chosen that ensemble
because, as we shall show, there exist exact relations that allow us
to determine the spectral properties of the UME even though the
ensemble is not unitarily invariant. That allows us to compute the
$1/N $ corrections to the universal limit. To our knowledge there has
been little previous work on uni-modular ensembles, except for the
work by Sodin and Feldheim, who investigate the fluctuations of
spectral moments in the UME and the minimum eigenvalue of unimodular
covariance matrices \cite{Sodin-2007,Sodin-2009,Sodin-2016}, and that
of Lakshminarayan, Puchala and Zyczkowski \cite{Lakshminarayan-2014},
who obtain exact expressions for the first four spectral moments of
unimodular covariance matrices\footnote{In contrast to the present
  article, the term `unimodular ensemble' is used in
  \cite{Lakshminarayan-2014} for the non-Hermitian counterpart of
  (\ref{UME Definition}).} and provide a conjecture for all moments.

The article is structured as follows: In the remainder of the present
Section we briefly discuss the distribution of the spectral moments in
the context of the GUE and related ensembles. We then use a general
argument to show that in the large $N$ limit the probability
distribution of these moments in the UME approaches that of the GUE
and discuss the difficulties in proceeding with corrections. In
Section~\ref{susy} we obtain corrections to the mean spectral density
via a supersymmetric approach and point to the difficulties of going
to higher order. In Section~\ref{Graph theory approach} we show how
these difficulties may be addressed in some cases using a
graph-theoretical and combinatorial techniques. Finally,
  in Section~\ref{BM approach} we use a Brownian motion approach,
  similar in spirit to Dyson's original approach \cite{Dyson-1962},
  with a theorem of Meckes \cite{Meckes-2009} utilising Stein's method, to show that the
  fluctuations of the spectral moments are Gaussian in the large $N$
  limit, and provide rates of convergence.

\subsection{Spectral moments in random-matrix theory}

The spectral moments $\tau_k(\H)$ of a Hermitian matrix $\H$ are given
by the moments of the empirical spectral density,
\begin{equation} \label{Spectral moments}
\tau_k(\H) = \int d \lambda \ \lambda^k \left( \frac{1}{N} \sum_{i=1}^N
\delta(\lambda - \lambda_i(\H)) \right) = \frac{1}{N} {\rm Tr} (\H^k).
\end{equation}
Here $\lambda_i$ with $i = 1, 2, \ldots, N$ are the eigenvalues of
$\H$. For Wigner matrices, the mean values of these moments
vanish for odd $k$ and, for even $k = 2 \nu$, converge in
  the limit of large matrix dimension to $C_\nu N^k$ where $C_\nu = (2
  \nu)! / (\nu! (\nu + 1)!)$ are the Catalan
  numbers~\cite{Wigner-1955,Wigner-1958}. In addition, the variances
converge sufficiently rapidly to conclude that the density converges
weakly, almost surely, to Wigner's semi-circle law
$\sigma(\lambda) = \frac{1}{(2\pi)}\sqrt{4 - \lambda^2}$
(see e.g. \cite{Anderson-2009} for instance). For the GUE, with the
average defined by (\ref{GUE definition}) and the mean values of the
moments (indicated by angular brackets) by

\begin{equation}\label{GUE Moment dfn}
 m_k := \Av{\tau_{2k}\left(\frac{\H}{\sqrt{N}}\right)} = \Av{\frac{1}{N^{k+1}}\Tr(\H^{2k})} \ .
 \end{equation}
Harer and Zagier~\cite{Harer-1986} discovered the three-term
recurrence relation
\begin{equation}\label{GUE recurrence}
(k+1)m_k = (4k-2)m_{k-1} + \frac{(k-1)(2k-1)(2k-3)}{N^2}m_{k-2} \ .
\end{equation}
That recurrence relation immediately leads to the following correction
to Wigner's leading term,
\begin{equation}\label{GUE moments}
m_k = C_k \left(1 + \frac{1}{N^2}\frac{(k-1)k(k+1)}{12} + \O(N^{-4})\right).
\end{equation}
Recurrence relations similar to (\ref{GUE recurrence}) were found for
the GOE and GSE by Ledoux \cite{Ledoux-2009} and for ensembles
characterized by the index $\beta$ by Witte and Forrester \cite
{Witte-2014}. For Gaussian, Laguerre and Jacobi $\beta$ ensembles,
exact expressions for the moments were given by Mezzadri, Reynolds and
Winn in terms of Jack polynomials~\cite{Mezzadri-2016}. These,
however, do not seem to lend themselves to asymptotic expansions in $1
/ N$. The systematic approach to the $1/N$ expansion has also been
addressed in the context of RMT distributions of the mean delay time.
These involve random matrix ensembles with exact expressions for the
joint probability density functions of the eigenvalues (see
e.g. \cite{Mezzadri-2012,Mezzadri-2013,Cunden-2016} and references
therein). We are not aware of attempts to go beyond leading order in
other matrix ensembles.

Fluctuations of the moments (for an arbitrary ensemble defined in
analogy to (\ref{Spectral moments})) are often discussed in terms of
the so-called \emph{linear-statistic}
\begin{equation}\label{Linear statistic}
L_f(\H):= {\rm Tr} [f(\H)] - \Av{{\rm Tr} [f(\H)]},
\end{equation}
where ${\rm Tr} [f(\H)] := \sum_{i=1}^N f(\lambda_i(\H))$. We note
that if $f$ is a polynomial then ${\rm Tr} [f(\H)]$ is simply a
weighted sum over the moments. The distribution of $L_f(\H)$ and
related quantities were first analysed by Jonson in the case of
Wishart matrices \cite{Jonson-1982}, by Johansson in the case of
unitarily invariant matrices \cite{Johansson-1998}, and by a number of
authors in the case of Wigner matrices
\cite{Khorunzhy-1996,Sinai-1998}. In all cases one observes
convergence to a Gaussian distribution, with a universal variance, in
the limit of large matrix size.

There exists a large number of papers - too many for detailed
referencing - that prove universality of $L_f(\H)$ for various types
of random matrices. We only emphasise some results that are
particularly relevant to this article. Using similar techniques to
those presented here, Sodin has shown for the UME that the moments of
$L_f(\CM)$ are Gaussian (see e.g. \cite{Sodin-2016} and references
therein) but does not discuss rates of convergence. Chatterjee has
previously used Stein's method along with estimations of Poincar\'{e}
inequalities to provide bounds on the total-variation distance between
a Gaussian and $L_f(\H)$ for appropriate random matrix ensembles
\cite{Chatterjee-2007}. Finally Cabanal-Duvillard has used a
Brownian-motion approach to derive similar results for the GUE
\cite{Cabanal-Duvillard-2001}, using the eigenvalue motion directly.

\subsection{Moments of the UME}
\label{moments}

We show that for $N \to \infty$, the distribution of the spectral
moments of the UME coincides with that of the GUE. We do so by showing
that in the limit, all moments and all products of moments of the UME
have the same values as for the GUE. The latter, defined in (\ref{GUE
  definition}), consists of Hermitean matrices ${\cal H}$ with
elements $H_{\mu \nu} = H^*_{\nu \mu}$ that are Gaussian-distributed
zero-centred random variables with second moments
\be
\Av{ H_{\mu \nu} H_{\nu' \mu'} } = \delta_{\mu \mu'}
\delta_{\nu \nu'} \ .
\label{2}
\ee
The normalization of the matrix elements in (\ref{2}) implies that
the support of the spectral density is $(- 2 \sqrt{N}, + 2
\sqrt{N})$. The elements of the UME have zero average and second
moments
\be
\langle M_{\mu \nu} M_{\nu' \mu'} \rangle = \delta_{\mu \mu'}
\delta_{\nu \nu'} (1 - \delta_{\mu \nu}) \ .
\label{3}
\ee
We note that for $\mu \neq \nu$, $| M_{\mu \nu} |^2 = 1$ without
averaging.

We first show that to leading order in $1 / N$ we have
\be
\langle {\rm Tr} ({\cal H}^n) \rangle = \langle {\rm Tr} ({\cal M}^n)
\rangle \ {\rm for \ all \ integer} \ n \geq 1 \ .
\label{4}
\ee
The Gaussian distribution of $H_{\mu \nu}$ implies $\langle {\rm Tr}
({\cal H}^n) \rangle = 0$ for odd values of $n$. For $n = 2 k$ even,
the trace is calculated using Wick contraction. Contributions of
leading order in $1 / N$ arise only from a subset of all Wick
contraction patterns (``nested contractions'') where contraction lines
connecting pairs of contracted matrix elements do not intersect. The
result is
\be
\langle {\rm Tr} ({\cal H}^{2 k}) \rangle = C_k N^{k + 1} + \ldots \ .
\label{5}
\ee
The Catalan numbers count the number of nested contractions in the
$n^{th}$ moment, with $n = 2 k$. The dots indicate terms of order
$N^l$ with $l \leq k$.

For the UME, we obviously have $\langle {\rm Tr} ({\cal M}^n) \rangle
= 0$ for $n$ odd. For ${\rm Tr}({\cal M}^{2})$, (\ref{3}) yields $N^2
- N$. That differs from the GUE result ${\rm Tr}({\cal H}^2) = N^2$ by
a term of order $N^{- 1}$. That term is due to the last Kronecker
symbol in (\ref{3}). Higher even moments of ${\cal M}$ receive
contributions not only from the pairwise correlations displayed in
(\ref{3}), but also from correlations of order $4, 6, ...$. We
demonstrate the existence of such correlations for the case of order
four.  The average of the term $M_{\mu \nu} M_{\mu' \nu'} M_{\mu''
  \nu''} M_{\mu''' \nu'''}$ vanishes unless the indices are pairwise
equal but differ within each factor ${\cal M}$. For the correlation of
order four (as opposed to a product of correlations of order two) that
gives
\ba
&& \langle M_{\mu \nu} M_{\mu' \nu'} M_{\mu'' \nu''} M_{\mu''' \nu'''}
\rangle_4 = [\delta_{\mu \nu'} \delta_{\nu \mu'} (1 - \delta_{\mu \nu})]
\nonumber \\
&& \qquad \times [\delta_{\mu'' \nu'''} \delta_{\nu'' \mu'''} (1 -
\delta_{\mu'' \nu''})] \delta_{\mu \mu''} \delta_{\nu \nu''} + \ldots \ .
\label{6}
\ea
The factors in straight brackets impose the conditions $\phi_{\mu \nu}
= - \phi_{\mu' \nu'}$ and $\phi_{\mu'' \nu''} = - \phi_{\mu'''
  \nu'''}$. The last two Kronecker deltas yield $\phi_{\mu \nu} =
\phi_{\mu'' \nu''}$, a condition that would be absent for a product of
two correlations of order two. The dots indicate terms obtained by a
permutation of the indices. Direct calculation yields ${\rm Tr}({\cal
  M}^4) = 2 N^3 - 3 N^2 + N$. That differs from the GUE result ${\rm
  Tr}({\cal H}^4) = 2 N^3 + N$ by terms of order $1 / N$. The
difference is due to the last Kronecker delta in (\ref{3}) and to the
two Kronecker deltas in (\ref{6}). Each Kronecker delta reduces the
number of independent summations in the expression for the trace and,
thus, produces terms of order $1 / N$. Correlations of higher order
than four exist and carry additional Kronecker symbols beyond the ones
in (\ref{6}). Therefore, with increasing $n$ the expressions for ${\rm
  Tr}({\cal M}^{2 n})$ and for ${\rm Tr}({\cal H}^{2 n})$ become ever
more different. The differences are confined, however, to terms of
order $1 / N$ or smaller. The term of leading order in ${\rm Tr}({\cal
  M}^{2n})$ is obtained by taking account only of binary correlations
and by omitting the last Kronecker delta in (\ref{3}). Hence,
\be
\langle {\rm Tr} ({\cal M}^{2 k}) \rangle = C_k N^{k + 1} + \ldots \ .
\label{7}
\ee
The terms indicated by dots are of lower order in $N$. They differ
from the terms indicated in the same manner in (\ref{5}).  Comparison
of (\ref{5}) and (\ref{7}) shows that all moments of the GUE and of
the UME become asymptotically ($N \to \infty$) equal, and that
(\ref{4}) holds.

To see in which sense (\ref{4}) applies we consider next-order
corrections. We start with corrections due to last Kronecker delta in
(\ref{3}). In (\ref{7}), terms of order $N^k$ arise from all nested
contributions involving that term once. The additional Kronecker delta
can be affixed to each one of the pairwise contractions ($k$
pairs). The number of contributions is, therefore, equal to $k C_k$
and the total contribution is given by $k C_k N^k$.  In comparison
with the result~(\ref{7}) that term is of order $k/N$.  For fixed $k$
the contribution vanishes for $N \to \infty$. It does not vanish,
however, for fixed $N$ and $k \to \infty$. An analogous conclusion
holds for the contribution of correlations of higher order to the
right-hand side of (\ref{7}). We conclude that (\ref{4}) is an
asymptotic relation. It establishes the identity of the $k^{\rm th}$
moments for fixed $k$ and $N \to \infty$.

We turn to the average of products of moments and show that for all
positive integer $k, n_1, n_2, \ldots, n_k$ and to leading order in
$N$ we have
\be
\langle {\rm Tr} ({\cal H}^{n_1}) \times \ldots \times {\rm Tr}
({\cal H}^{n_k}) \rangle = \langle {\rm Tr} ({\cal M}^{n_1}) \times
\ldots \times {\rm Tr} ({\cal M}^{n_k}) \rangle \ .
\label{8}
\ee
The left-hand side of (\ref{8}) is evaluated by calculating all Wick
contractions of pairs of matrix elements of ${\cal H}$. That rule
comprises pairs of matrix elements occurring under the same trace and
pairs that occur in different traces. Only nested contributions
contribute to the leading order in $N$. The right-hand side of
(\ref{8}) is evaluated using the binary correlation of (\ref{3}) as
well as all higher-order correlations as exemplified in (\ref{6}).
These likewise comprise sets of matrix elements that occur either
under the same trace or under two or more different traces. As in the
case of (\ref{4}) we use the fact that the leading-order terms in $N$
are obtained by suppressing the minimum number of independent
summations over matrix indices. That rules out all higher-order
correlations and leaves us with the binary correlations of
(\ref{3}). For the terms of leading order in $N$ we suppress the last
Kronecker delta in that equation. As a result we find that the
leading-order terms in $N$ of the right-hand side of (\ref{8}) are
obtained by calculating all Wick contractions of matrix elements of
${\cal M}$ (occurring either under the same trace or as arguments of
different traces). For each Wick-contracted pair the rule is the same
as for the GUE in (\ref{2}). The rules for calculating the right-hand
side of (\ref{8}) being the same as for the left-hand side, the
results are the same, too, and (\ref{8}) is seen to hold to leading
order in $N$. Again, that is an asymptotic result. It holds for fixed
$n_1, n_2, \ldots, n_k$ and $N \to \infty$.

To leading order in $1 / N$, these results imply the equality of the
mean spectral density of the GUE and the UME and also the convergence
in distribution of the $\tau_k$ and by extension of the $L_f(\CM)$ for
polynomial functions $f$. They do not, however, allow us to obtain
corrections to this density or rates of convergence for the
distributions. These aspects are explored in subsequent sections.

\subsection {Mean spectral density}

The empirical density $\rho(E) = \frac{1}{N}\sum_{i=1}^N \delta(E -
\lambda_i(\H))$ (see Eqn. (\ref{Spectral moments})), normalised so that
$\int dE \rho(E) = 1$, can be written in terms of the retarded Green's
function $G^{(r)}(E) = (E^+ - \CM)^{-1}$, where $E^+ = E +
i\varepsilon$ with $\varepsilon$ infinitesimal and positive, as
\be
\rho(E) = - \frac{1}{N\pi} \lim_{\varepsilon \to 0} \Im {\rm Tr} [G^{(r)}(E)]
\ .
\label{9}
\ee
We expand the retarded Green's function for the UME as
\be
{\rm Tr} [G^{(r)}(E)] = \sum_{n = 0}^\infty \frac{1}{(E^+)^{n + 1}}
{\rm Tr} [\CM^n] \ .
\label{10}
\ee
We first calculate $\langle G^{(r)}(E) \rangle$ to leading order in $1
/ N$ and then consider the sub-leading contributions. Using the
expansion~(\ref{10}) and taking into account only nested contributions
to the average, we obtain the Pastur equation \cite{Pastur-1972}
\be
\langle G^{(r, N)}(E) \rangle = \frac{1}{E} + \frac{1}{E} \langle
{\cal M} (\langle G^{(r, N)}(E) \rangle) {\cal M} \rangle \langle
G^{(r, N)}(E) \rangle \ .
\label{11}
\ee
The upper index $N$ stands for the leading-order term. We use the
binary correlator~(\ref{3}), suppress the last Kronecker delta, and
obtain
\be
\langle G^{(r, N)}(E) \rangle = \frac{1}{E} + \frac{1}{E} {\rm Tr}
(\langle G^{(r, N)}(E) \rangle) \ \langle G^{(r, N)}(E) \rangle \ .
\label{12}
\ee
We take the trace of (\ref{12}) and solve the resulting quadratic
equation for ${\rm Tr} (\langle G^{(r, N)}(E) \rangle)$. That gives
\be
{\rm Tr} (\langle G^{(r, N)}(E) \rangle) = \frac{E}{2} - i \sqrt{N}
\sqrt{1 - \frac{E^2}{4 N}} \ .
\label{13}
\ee
The range of the spectral density is $(- 2 \sqrt{N}, + 2 \sqrt{N})$.
For the full Green function we find
\be
\langle G^{(r, N)}(E) \rangle_{\mu \nu} = \frac{1}{N} \bigg( \frac{E}{2}
- i \sqrt{N} \sqrt{1 - \frac{E^2}{4 N}} \bigg) \delta_{\mu \nu} \ .
\label{14}
\ee
To leading order the spectral density is the same as for the GUE, as
expected.

Correction terms of order $1 / N$ to (\ref{12}) arise when either the
last Kronecker delta in (\ref{3}) or the fourfold
correlation~(\ref{6}) are taken into account once. Non-nested
contributions and higher-order correlations do not contribute to that
order. Using in (\ref{11}) the last Kronecker delta in (\ref{3}) we
obtain
\be
\delta G^{(r, bin)}_{\mu \nu}(E) = - \frac{1}{E} \langle G^{(r, N)}(E)
\rangle_{\mu \mu} \langle G^{(r, N)}(E) \rangle_{\mu \nu} \ .
\label{15}
\ee
The additional contribution in (\ref{11}) due to the fourfold
correlation term~(\ref{6}) is
\ba
\hspace{-15mm}
\delta G^{(r, four)}_{\mu \nu}(E) &=& \frac{1}{E} \langle G^{(r, N)}(E)
\rangle_{\mu \mu} \bigg( \sum_{\rho \neq \mu} (\langle G^{(r, N)}(E)
\rangle_{\rho \rho} )^2 \bigg)
 \langle G^{(r, N)}(E) \rangle_{\mu \nu} \ .
\label{16}
\ea 
Equation~(\ref{14}) shows that $G_{\mu \mu}(E)$ is of order $1 /
N$. Therefore, both contributions~(\ref{15}) and (\ref{16}) are of
order $1 / N$ compared to the leading contribution in (\ref{12}).
Adding the results~(\ref{15}) and (\ref{16}) we obtain as a $(1 /
N)$-correction to the spectral density of the UME a polynomial of
fourth order in $\langle G^{(r, N)}(E) \rangle$. That correction is
completely different from the $1 / N$ oscillations of the spectral
density displayed in later sections of the paper. The reason is that
the Pastur equation is valid only asymptotically. It is derived with
the help of the same asymptotic expansion as used for the averaged
traces in Eq.~(\ref{4}). That approach cannot be used for a systematic
evaluation of terms of next order in $1 / N$.

In Section \ref{susy} and Section \ref{Graph theory approach} we go
beyond leading order by using a supersymmetry approach first
developed in \cite{Kal02,Sha15} and a graph theoretic approach
adapted from $d$-regular graphs \cite{OrenI,OrenII}.

\section{Supersymmetry}
\label{susy}

Equation~(\ref{8}) suggests that all level correlation functions for the
UME coincide to leading order in $1 / N$ with those of the GUE.  The
argument goes as follows. The $(P, Q)$ level correlation function for
the UME is defined as
\ba
\hspace{-25mm}
&& \langle {\rm Tr} G^{(r)}(E + \ve_1) \times \ldots \times {\rm Tr}
G^{(r)}(E + \ve_P) \nonumber \\
&& \ \ \times {\rm Tr} G^{(a)}(E - \tilde{\ve}_1) \times \ldots \times
{\rm Tr} G^{(a)}(E - \tilde{\ve}_Q) \rangle \ .
\label{17}
\ea
Here $G^{(r)}(E)$ and $G^{(a)}(E)$ are the retarded and the advanced
Green functions for the UMA, respectively. The increments $\ve_p$, $p
= 1, \ldots, P$ and $\tilde{\ve}_q$, $q = 1, \ldots, Q$ are of the
order of the mean level spacing. The $(P, Q)$ level correlation
function for the GUE has the same form except for the replacement
${\cal M} \to {\cal H}$ in each of the Green's functions.

We use the expansion~(\ref{10}) for ${\rm Tr} G^{(r)}(E)$ and proceed
correspondingly for ${\rm Tr} G^{(a)}(E)$. Each term in the resulting
expansion of the correlation function~(\ref{17}) contains an ensemble
average over products of traces of powers of ${\cal M}$ that has the
form of the right-hand side of (\ref{8}). We proceed analogously for
the level correlator of the GUE, expanding the Green's functions in
powers of ${\cal H}$. Each term in the resulting series is obtained
from the corresponding term of the UME by the formal replacement
${\cal M} \to {\cal H}$. That same replacement converts the ensemble
average over products of traces of powers of ${\cal M}$ into the
ensemble average over products of traces of powers of ${\cal H}$. With
(\ref{8}) showing that these averages are equal to leading order in
$N$ we conclude that all $(P, Q)$ level correlationfunctions of the
UME coincide with those of the GUE in that order.

The argument lacks stringency, however. It is based upon a
perturbative expansion of the correlation functions. In contrast to
the spectral density, all correlation functions possess a zero mode.
The two-point function, for instance, has a zero mode at $\ve_1 = 0 =
\tilde{\ve}_1$ and thus, perturbatively, a singularity. That is why we
turn to the supersymmetry approach where the zero mode is treated
exactly.

The one-point function is written as
\be
{\rm Tr} \frac{1}{E^+ - {\cal M}} = \frac{1}{2} \frac{\partial}
{\partial j}{\cal G}(j) \bigg|_{j = 0} \ {\rm where} \ {\cal G}(j) =
\frac{\det(E^+ - {\cal M} + j)}{\det(E^+ - {\cal M} - j)} \ .
\label{18}
\ee
The generating function ${\cal G}(j)$ is written as a
superintegral. The $2 N$-dimensional supervector \[ \psi = (s_1, \ldots,
s_N, \chi_1, \ldots, \chi_N)^T \] contains the commuting complex
variables $s_k$ and the anticommuting variabless $\chi_k$, $k = 1,
\ldots, N$ with $\int \chi_k {\rm d} \chi_k = (2 \pi)^{- 1/2} = \int
\chi^*_k {\rm d} \chi^*_k$ for all $k$. The integration measure is the
flat Berezinian ${\rm d} (\psi^*, \psi) = \prod_k {\rm d} \Re(s_k)
{\rm d} \Im(s_k) {\rm d} \chi^*_k {\rm d} \chi_k$. In the $2
N$-dimensional superspace (the direct product of the $N$-dimensional
ordinary space with indices $k = 1, \ldots, N$ and the two-dimensional
superspace with indices $s = 1, 2$) we define
\be
{\cal C} = (E^+ 1^N - {\cal M}) 1^s - j \sigma_3 1^N \ .
\label{19}
\ee
Here $1^s$ and $\sigma_3$ are the unit matrix and the third Pauli spin
matrix, respecively, in two-dimensional superspace while $1^N$ is the
unit matrix in ordinary space. With $\tilde{\psi} = (\psi^*)^T$ we
have
\be
{\cal G}(j) = \int {\rm d} (\psi^*, \psi) \ \exp \{ (i/2) \tilde{\psi}
{\cal C} \psi \} \ .
\label{20}
\ee
The ensemble average of ${\cal G}$ is defined as an average over the
independent phases $\phi_{k l}$ with $k < l$.

To average ${\cal G}$ we calculate the expectation value of $\exp \{ -
(i/2) (\tilde{\psi} {\cal M} 1^s \psi) \}$. We first consider the
moments of $(\tilde{\psi} {\cal M} 1^s \psi)$. All odd moments vanish.
For the second moment we use (\ref{3}). The fourth moment is the sum
of the binary and the fourfold correlations given in (\ref{3}) and
(\ref{6}), respectively. Thus,
\ba
\langle (\tilde{\psi} {\cal M} 1^s \psi)^2 \rangle &=& \sum_{k \neq l}
(\tilde{\psi}_k \psi_l) (\tilde{\psi}_l \psi_k) \ , \nonumber \\
\langle (\tilde{\psi} {\cal M} 1^s \psi)^4 \rangle &=& 3 \bigg[
\sum_{k \neq l} (\tilde{\psi}_k \psi_l) (\tilde{\psi}_l \psi_k) \bigg]^2
+ \sum_{k \neq l} \bigg[ (\tilde{\psi}_k \psi_l) (\tilde{\psi}_l \psi_k)
\bigg]^2 \ .
\label{21}
\ea
That gives
\ba
\fl\bigg\langle \exp \{ - (i/2) (\tilde{\psi} {\cal M} 1^s \psi) \}
\bigg\rangle &=& \exp \bigg\{ - \frac{1}{8} \sum_{k \neq l}
(\tilde{\psi}_k \psi_l)(\tilde{\psi}_l \psi_k) + \frac{1}{3 \cdot 8^2}
\sum_{k \neq l} \bigg[ (\tilde{\psi}_k \psi_l) (\tilde{\psi}_l \psi_k)
\bigg]^2 \bigg\} \ .
\label{22}
\ea
The first two terms in the exponent strongly suggest how the series
continues although we have not checked that. Writing
\be
\sum_{k \neq l} (\tilde{\psi}_k \psi_l) (\tilde{\psi}_l \psi_k) =
\sum_{k, l} (\tilde{\psi}_k \psi_l) (\tilde{\psi}_l \psi_k) - \sum_k
(\tilde{\psi}_k \psi_k)^2 \ ,
\label{23}
\ee
we observe that the second term on the right-hand side of (\ref{23})
is of order $1 / N$ compared to the first one. The same is true of the
second term on the right-hand side of the second of (\ref{21}) in
comparison with the first one. And the same statement holds {\it a
  fortiori} for higher-order correlations. To leading order in $1 / N$
we, therefore, have
\be
\bigg\langle \exp \{ - (i/2) (\tilde{\psi} {\cal M} 1^s \psi) \}
\bigg\rangle \approx \exp \bigg\{ - \frac{1}{8} \sum_{k, l}
(\tilde{\psi}_k \psi_l)(\tilde{\psi}_l \psi_k) \bigg\} \ .
\label{24}
\ee
For the GUE we have correspondingly
\be
\bigg\langle \exp \{ - (i/2) (\tilde{\psi} {\cal H} 1^s \psi) \}
\bigg\rangle = \exp \bigg\{ - \frac{1}{8} \sum_{k, l} (\tilde{\psi}_k
\psi_l) (\tilde{\psi}_l \psi_k) \bigg\} \ .
\label{25}
\ee
The equality of the right-hand sides of (\ref{24}) and (\ref{25})
implies that to leading order in $1 / N$ the spectral densities of the
UME and of the GUE are identical.

We turn to the $(P, Q)$ level correlation function of the UME as
defined in (\ref{17}). We skip the construction of the generating
function for the correlation function~(\ref{17}) because it runs
completely parallel to that for the GUE given in Ref.~\cite{Plu15}.
Suffice it to say that the result is similar in form to (\ref{20}),
with the following differences. The vectors $\psi$ and $\tilde{\psi}$
and the matrix ${\cal C}$ now have dimension $2 N (P + Q)$, the matrix
${\cal C}$ contains the matrix ${\cal M}$ in block-diagonal form $(P +
Q)$ times, the scalar $j$ becomes a matrix of dimension $(P + Q)$, the
vector $\psi$ ($\tilde{\psi}$) is multiplied from the left (right) by
the matrix ${\bf L}^{1/2}$ where ${\bf L} = {\bf 1}$ in the retarded
and ${\bf L} = - {\bf 1}$ in the advanced sector, and the energy
increments $\ve_1, \ldots, \ve_P$ and $\tilde{\ve}_1, \ldots,
\tilde{\ve}_Q$ appear in the exponent. Evaluating the expectation
value of $\exp \{ - (i/2) \tilde{\psi} {\cal M} \psi \}$ by using
(\ref{21}) and dropping terms of higher order in $1 / N$ we obtain
exactly the same form for the averaged generating function as for the
GUE. That implies that to leading order in $1 / N$, all $(P, Q)$ level
correlation functions for the UME coincide with those of the GUE.

There are two possibilities to go beyond these results. First, the
average over the phases $\phi_{\mu \nu}$ can be done exactly using the
color-flavor transformation~\cite{Zir96}. For quantum graphs, that
transformation was used in Refs.~\cite{Gnu04, Gnu05, Plu15}. In the
present context the color-flavor transformation seems uneconomical,
however. The reason is seen by considering the generating
function~(\ref{20}) for the spectral density. In (\ref{20}) the
ensemble average has to be taken by integrating over the real phase
angles $\phi_{\mu \nu}$ with $\mu < \nu$ and $\mu, \nu = 1, \ldots,
N$. The color-flavor transformation performs these integrations at the
expense of introducing for every $\phi_{\mu \nu}$ a pair of
supermatrices $(Z_{\mu \nu}, \tilde{Z}_{\nu \mu})$ with $\mu <
\nu$. That increases the number of integration variables by a factor
eight. We have, therefore, not followed that possibility. The second,
more attractive possibility is to examine correction terms of order $1
/ N$ within the Hubbard-Stratonovich approximation to the
supersymmetry approach. We do that, confining ourselves to the
spectral density.

\subsection{Terms of Order $1 / N$}

The papers by Kalisch and Braak~\cite{Kal02} and by
Shamis~\cite{Sha15} show how corrections of order $1 / N$ to the
spectral density of the GUE can be worked out. Shamis makes heavy use
of the unitary invariance of the GOE. Such invariance is not shared by
the UME. That is why we follow the work of Kalisch and Braak. These
authors use the Hubbard-Stratonovich transformation, perform the
integration over the two remaining anticommuting variables exactly,
and then use the saddle-point approximation. We apply that method in
our more general context.

We approximate the ensemble average of $\exp \{ - (i/2) \tilde{\psi}
{\cal M} \psi \}$ by keeping only correction terms of order $1 /
N$. That is done by dropping in (\ref{22}) the last term and by
writing the second term as in (\ref{23}). Thus,
\ba
\langle {\cal G}(j) \rangle &\approx& \int {\rm d} (\psi^*, \psi) \
\exp \{ (i/2) \tilde{\psi} (E 1^s - j \sigma_3 ) \psi \} \nonumber \\
&& \times \exp \{ - \frac{1}{8} \sum_{k, l} (\tilde{\psi}_k \psi_l)
(\tilde{\psi}_l \psi_k) + \frac{1}{8} \sum_k (\tilde{\psi}_k \psi_k)^2
\} \ .
\label{26}
\ea
We eliminate the first quartic term by a single Hubbard-Stratonovich
transformation and the second one by $N$ such transformations, one for
each term $(\tilde{\psi}_k \psi_k)^2$. We define the two-by-two
supermatrices
\be
{\cal A} = (i/2) \sum_k \psi_k ) ( \tilde{\psi}_k \ , \ {\cal B}_k =
(1/2) \psi_k ) ( \tilde{\psi}_k
\label{27}
\ee
so that
\be
- \frac{1}{8} \sum_{k, l} (\tilde{\psi}_k \psi_l) (\tilde{\psi}_l
\psi_k) = \frac{1}{2} {\rm STr}_s ({\cal A}^2) \ , \ \frac{1}{8}
(\tilde{\psi}_k \psi_k)^2 = \frac{1}{2} {\rm Str}_s ({\cal B}^2_k) \ .
\label{28}
\ee
We use
\ba
\exp \{ \frac{1}{2} {\rm STr}_s ({\cal A}^2) \} &=& \int {\rm d}
\sigma \ \exp \{ - \frac{1}{2} {\rm STr}_s (\sigma^2) - {\rm STr}_s
(\sigma {\cal A}) \} \ , \nonumber \\
\exp \{ \frac{1}{2} {\rm STr} ({\cal B}_k^2) \} &=& \int {\rm d}
\tau_k \ \exp \{ - \frac{1}{2} {\rm STr} (\tau_k^2) - {\rm STr}
(\tau_k {\cal B}_k) \} \ .
\label{29}
\ea
The supermatrices $\sigma$ and $\tau_k$ all have dimension two. We
insert that into (\ref{27}) and carry out the Gaussian integrals over
the original integration variables. That gives
\ba
\langle {\cal G}(j) \rangle &=& \int {\rm d} \sigma \ \exp \{ -
\frac{1}{2} {\rm STr}_s (\sigma^2) \} \prod_{k = 1}^N \bigg\{ \int {\rm d}
\tau_k \ \exp \{ - \frac{1}{2} {\rm STr}_s (\tau_k^2) \} \nonumber \\
&& \qquad \times \exp \{ - \sum_k {\rm STr}_{s} \ln \big( E 1^s - \sigma
- i \tau_k - j \sigma_3 \big) \} \bigg\} \ .
\label{30}
\ea
The indices indicate that the supertraces extend only over the
superindices $s$. We remark in parentheses that (\ref{30}) shows the
limitations of the supersymmetry approach. For the GUE we deal with a
single supermatrix $\sigma$. The $1 / N$ correction to the GUE
introduces $N$ additional supermatrices $\tau_k$. Corrections of
higher order in $1 / N$ lead to ever more complex integrals, bringing
the method to its limit.

\subsubsection{GUE}
\label{gue}

We are guided by the GUE case. We consider only terms up to first
order in $j$ and indicate that fact by using the sign $\approx$ in
Eqs.~(\ref{G1}) through (\ref{G4}),
\ba
&& \langle {\cal G}(j) \rangle = \int {\rm d} \sigma \ \exp \{ -
\frac{1}{2} {\rm STr}_s (\sigma^2) \} \exp \{ - N {\rm STr}_{s}
\ln \big( E 1^s - \sigma - j \sigma_3 \big) \} \bigg\}
\nonumber \\
&\approx& \int {\rm d} \sigma \ \exp \bigg\{ - \frac{1}{2} {\rm STr}_s
(\sigma^2) - N {\rm STr}_{s} \ln \big( E 1^s - \sigma\big) \bigg\}
\nonumber \\
&& \qquad \times \bigg( 1 + j N {\rm STr} \bigg( \frac{1}{E 1^s -
\sigma} \sigma_3 \bigg) \bigg) \ .
\label{G1}
\ea
We write
\ba
\sigma = \left( \matrix{ s_B & \alpha \cr
                         \alpha^* & i s_F \cr} \right) \ .
\label{G2}
\ea
Here $s_B, s_F$ are real commuting and $\alpha, \alpha^*$
anticommuting variables. We define $a = E^+ - s_B$, $b = E - i s_F$
and carry out the integrals over the anticommuting variables. Then
\ba
&& \langle {\cal G}(j) \rangle \approx \int_{- \infty}^{+ \infty}
{\rm d} s_B \int_{- \infty}^{+ \infty} {\rm d} s_F \ \exp \bigg\{ -
\frac{1}{2} (s^2_B + s^2_F) \bigg\} \bigg( \frac{b}{a} \bigg)^N
\nonumber \\
&& \times \bigg\{ 1 - \frac{N}{a b} + j N \bigg( \bigg[ \frac{1}{a}
+ \frac{1}{b} \bigg] \bigg[ 1 - \frac{N}{a b} \bigg] + \frac{a -
b}{a^2 b^2} \bigg) \bigg\} \ .
\label{G3}
\ea
We rescale $E \to x = E / \sqrt{N}$ and with it $s_B \to q = s_B /
\sqrt{N}$, $s_F \to p = s_F / \sqrt{N}$, $a \to c = a / \sqrt{N} = x^+
- q$, $b \to d = b / \sqrt{N} = x - i p$, and $j \to j' = j \sqrt{N}$.
The last operation assures that the average level density is
normalized to unity. Choosing $j' = j / \sqrt{N}$ would yield an
average level density normalized to $N$. Then
\ba
&& \langle {\cal G}(j') \rangle \approx N \int_{- \infty}^{+ \infty}
{\rm d} q  \int_{- \infty}^{+ \infty} {\rm d} p \ \exp \bigg\{ -
\frac{N}{2} (q^2 + p^2) \bigg\} \bigg( \frac{d}{c} \bigg)^N
\nonumber \\
&& \times \bigg\{ 1 - \frac{1}{c d} + j' \bigg( \bigg[ \frac{1}{c} +
\frac{1}{d} \bigg] \bigg[ 1 - \frac{1}{c d} \bigg] + \frac{1}{N}
\frac{c - d}{c^2 d^2} \bigg) \bigg\} \ .
\label{G4}
\ea
Using Eq.~(\ref{18}) we have performed the integrals over $p$ and $q$
analytically. The resulting expression in Hermite polynomials agrees
for every $N$ with the standard result.

For the saddle-point approximation we define the effective action
\be
{\cal A} = \frac{N}{2} (p^2 + q^2) + N \log (x - i p) - N \log (x -
q) \ .
\label{G5}
\ee
It is the sum of the effective actions ${\cal A}_q$ and ${\cal A}_p$
for the variables $q$ and $p$. Therefore, the saddle points for $p$
and for $q$ are unrelated. The saddle points for $q$ are $q_0 = (x/2)
\pm i \sqrt{1 - x^2 / 4}$. Because of the singularity of the integrand
at $x^+$ we admit only the the solution in the lower half plane so
that
\be
q_0 = \frac{x}{2} - i \sqrt{1 - \frac{x^2}{4}} \ .
\label{G7}
\ee
As $x^+$ moves from $- 2^+$ to $+ 2^+$ the saddle point $q_0$ moves
from $-1$ on a semicircle into the lower half plane, reaches the value
$- i$ for $x = 0$, and continues on the semicircle to $+ 1$ for $x =
2$. Without crossing the singularity, we can shift the path of
integration for all values of $x$ with $|x| \leq 2$ so that it runs
parallel to the real axis and passes through $q_0$. We write $q = q_0
+ t_B$ and expand ${\cal A}_q$ around $q_0$ in powers of $t_B$ up to
second order,
\ba
&& {\cal A}_q = \frac{N}{2} (q_0 + t_B)^2 + N \log ( x^+ - q_0 - t_B)
\nonumber \\
&& \qquad \approx \frac{N}{2} q^2_0 + N \log (x^+ - q_0) \nonumber \\
&& \qquad \qquad + N \bigg( q_0 - \frac{1}{x - q_0} \bigg) t_B
+ \frac{N}{2} \bigg( 1 - \frac{1}{(x - q_0)^2} \bigg) t^2_B \ .
\label{G6}
\ea
The saddle points for $p$ are
\be
p^\pm_0 = - i \frac{x}{2} \pm \sqrt{1 - \frac{x^2}{4}} \ .
\label{G9}
\ee
As $x$ increases from $-2$ to zero, the two saddle points (that are
degenerate at $x = - 2$ with value $+ i$) move on semicircles in
opposite directions, reaching the real axis at $- 1$ and at $+ 1$ for
$x = 0$ and continue into the lower half plane, reaching the
degenerate point $- i$ at $x = + 2$. For all values of $|x| \leq 2$
the two saddles lie on a straight line parallel to the real axis. We
shift the path of integration so that it runs along that line. For
${\cal A}_p$ we write $p = p_0 + t_F$ and expand ${\cal A}_p$ in
powers of $t_F$ up to second order,
\ba
&& {\cal A}_p = \frac{N}{2} (p_0 + t_F)^2 - N \log ( x - i p_0 - i t_F)
\nonumber \\
&& \qquad \approx \frac{N}{2} p^2_0 - N \log (x - i p_0) + N \bigg( p_0 +
\frac{i}{x - i p_0} \bigg) t_F \nonumber \\
&& \qquad \qquad + \frac{N}{2} \bigg( 1 - \frac{1}{(x - i p_0)^2}
\bigg) t^2_F \ .
\label{G8}
\ea
We evaluate $\langle {\cal G}(j') \rangle$ at the two saddle points
defined by $i p_0 = q_0$ and $i p_0 = q^*_0$. We do so by expanding up
to and including terms of order $1 / N$. Upon integration over $t_B$
and $t_F$ that gives for $i p_0 = q_0$ the result $\langle {\cal
  G}(j')^- \rangle = 1 + 2 j' q_0$. Equation~(\ref{18}) and the fact
that the imaginary part of the retarded Green function equals $ - \pi
\delta(E - {\cal H})$ then yields for the spectral density
\be
\rho(x) = \frac{1}{\pi} \sqrt{1 - x^2/4} \ .
\label{G12}
\ee
That is the asymptotic expression ($N \to \infty$). For the first
saddle point, it holds up to and including terms of order $1 / N$.

For $i p_0 = q^*_0$ the leading-order contribution is of order $1 / N$
and given by
\ba
\langle {\cal G}(j')^+ \rangle &=& \frac{i}{N} \exp \{ - \frac{N}{2}
( q^2_0 - (q^*_0)^2) \} \bigg( \frac{q_0}{q^*_0} \bigg)^N \bigg\{
\frac{x}{4 - x^2} + j' \frac{1}{1 - x^2/4} \bigg\} \ .
\label{G15}
\ea
The $(1 / N)$-correction to the level density~(\ref{G12})
is~\cite{Kal02, Sha15}
\ba
\delta \rho(x) = \frac{1}{N \pi} \frac{1}{1 - x^2/4} \exp \{ i N x
\sqrt{1 - x^2/4} + 2 i N \arctan (- 2 \sqrt{1 - x^2/4} / x ) \} \ .
\label{G16}
\ea
Characteristic features are the rapid oscillations with frequency
$1/N$ and the singularities at the end points $x = \pm 2$ of the
spectrum. For finite $N$ the exact expression for the spectral density
is non-singular for all values of the energy. The singularities occur
only in the $1 / N$ expansion.

\subsection{UME}

We start from (\ref{30}). In the calculation of the normalization
integral ${\cal G}(0)$, we proceed as in Section~\ref{gue}, see also
the calculation of the source terms discussed below. The effective
action ${\cal A}$ is defined as the contribution of leading order in
$1 / N$ to the negative exponent of the integrand in ${\cal G}(0)$.
We mention without proof that the effective action turns out to be
equal to the effective action for the GUE in Eq.~(\ref{G5}). The
saddle points are the same. When we calculate the leading-order
contribution of the two saddle points to ${\cal G}(0)$ we find that
the result is identical to the GUE expression in (\ref{G4}).
Therefore, ${\cal G}^-(0) = 1$ and ${\cal G}^+(0) = 0$.

We turn to the source terms. For $\sigma$ we use the
parametrization~(\ref{G2}). For $\tau_k$ with $k = 1, 2, \ldots, N$ we
write
\ba
\tau_k = \left( \matrix{ t_{k B} & \gamma_k \cr
                        \gamma^*_k & i t_{k F} \cr} \right)  \ .
\label{32}
\ea
Here $t_{k B}, t_{k F}$ are real commuting and $\gamma_k, \gamma^*_k$
are anticommuting variables. Then
\be
{\rm STr} (\tau^2_k) = t^2_{k B} + t^2_{k F} + 2 \gamma_k \gamma^*_k \ .
\label{33}
\ee
We define
\be
a_k = (E - s_B - i t_{k B}) \ , \ b_k = (E - i s_F + t_{k F}) \ ,
\label{34}
\ee
In (\ref{30}) we keep terms of first oder in $j$. We have
\ba
&&\exp \bigg \{ - {\rm STr}_{s} \ln \big( E 1^s - \sigma - i \tau_k - j
\sigma_3 \big) \bigg\} \to j \ {\rm STr}_{s} ( \sigma_3 ( E 1^s -
\sigma - i \tau_k )^{- 1} ) \nonumber \\
&& \quad \times \exp \bigg\{ - {\rm STr}_{s} \ln \big( E 1^s
- \sigma - i \tau_k \big) \bigg\} \ .
\label{63}
\ea
We use the identity
\be
\left( \matrix{ a & \phi \cr
              \phi^* & b \cr} \right)^{- 1} = \left( \matrix{
      (1/a) (1 + \phi \phi^* / (a b)) & - \phi / (a b) \cr
          - \phi^* / (a b) & (1/b) (1 + \phi^* \phi / (a b) \cr}
\right)
\label{64}
\ee
which is valid for an arbitrary supermatrix of dimension two. The
prefactor in expression~(\ref{63}) becomes
\be
j \bigg( \frac{1}{a_k} + \frac{1}{b_k} + \frac{b_k - a_k}{a^2_k b^2_k}
(\alpha + i \gamma_k)(\alpha^* + i \gamma^*_k) \bigg) \ .
\label{65}
\ee
A change of integration variables shows that that factor is the same
for every value of $k$. Therefore
\ba
&& \frac{\partial}{\partial j} {\cal G}(j) \bigg|_{j = 0} = N \int
{\rm d} \sigma \ \exp \{ - \frac{1}{2} {\rm STr}_s (\sigma^2) \}
\nonumber \\
&& \qquad \prod_{k = 1}^N \bigg\{ \int {\rm d} \tau_k \ \exp \{ -
\frac{1}{2} {\rm STr}_s (\tau_k^2) \} \nonumber \\
&& \qquad \qquad \times \exp \{ - \sum_k {\rm STr}_{s} \ln \big(
E 1^s - \sigma - i \tau_k \} \bigg\} \nonumber \\
&& \qquad \times \bigg( \frac{1}{a_1} + \frac{1}{b_1} + \frac{b_1
- a_1}{a^2_1 b^2_1} (\alpha + i \gamma_1)(\alpha^* + i \gamma^*_1)
\bigg) \ .
\label{66}
\ea
We integrate explicitly over all anticommuting variables. That gives
\ba
&& \frac{\partial}{\partial j} {\cal G}(j) \bigg|_{j = 0} =
\frac{N}{2 \pi} \int {\rm d} s_B \int {\rm d} s_F \ \exp \{ -
\frac{1}{2} ( s^2_B + s^2_F \} \nonumber \\
&& \ \times \bigg(\frac{1}{2 \pi}\bigg)^N \prod_{k = 1}^N \int
{\rm d} t_{k B} {\rm d} t_{k F} \exp \{ - \frac{1}{2} (t^2_{k B} +
t^2_{k F}) \} \frac{b_k}{a_k} \prod_{l = 2}^N \bigg( 1 + \frac{1}
{a_l b_l} \bigg) \nonumber \\
&& \ \times \bigg[ \frac{1}{a_1} + \frac{1}{b_1} + 2 \frac{a_1 -
b_1}{a^3_1 b^3_1} - (N - 1) \bigg( \frac{1}{a_1} + \frac{1}{b_1} +
\frac{2}{a^2_1 b_1} \bigg) \frac{1}{1 + a_2 b_2} \bigg] \ .
\label{66a}
\ea
Integration over the commuting variables $\tau_{k, B}$ and $\tau_{k
  F}$ shows that the leading terms in powers of $1 / N$ are
\ba
&& \frac{\partial}{\partial j} \langle {\cal G}(j) \rangle
\bigg|_{j = 0}  = \frac{N}{2 \pi} \int_{- \infty}^{+ \infty} {\rm d}
s_B \int_{- \infty}^{+ \infty} {\rm d} s_F \ \exp \bigg\{ -
\frac{1}{2} (s^2_B + s^2_F) \bigg\} \nonumber \\
&& \times \exp \bigg\{ - N \ln (E^+ - s_B) + N \ln (E - i s_F)
\bigg\} \nonumber \\
&& \times \bigg\{ \bigg( \frac{1}{E^+ - s_B} + \frac{1}{E - i s_F}
\bigg) \bigg(1 - \frac{N}{(E^+ - s_B)(E - i s_F)} \bigg)
\nonumber \\
&& \qquad - \frac{s_B - i s_F}{(E^+ - s_B)^2(E - i s_F)^2} \bigg\}
\ .
\label{66b}
\ea
We rescale $E \to x = E / \sqrt{N}$ and with it $s_B \to q = s_B /
\sqrt{N}$, $s_F \to p = s_F / \sqrt{N}$, and $j \to j' = j
\sqrt{N}$. Then
\ba
&& \frac{\partial}{\partial j'} \langle {\cal G}(j') \rangle
\bigg|_{j = 0} = \frac{1}{2 \pi} \int_{- \infty}^{+ \infty} {\rm d}
q \int_{- \infty}^{+ \infty} {\rm d} p \ \exp \bigg\{ - \frac{N}{2}
(q^2 + p^2) \bigg\} \nonumber \\
&& \times \exp \bigg\{ - N \ln (x^+ - q) + N \ln (x - i p) \bigg\}
\nonumber \\
&& \times \bigg\{ \bigg( \frac{1}{x^+ - q} + \frac{1}{x - i p} \bigg)
\bigg(1 - \frac{1}{(x^+ - q)(x - i p)} \bigg) \nonumber \\
&& \qquad - \frac{1}{N} \frac{q - i p}{(x^+ - q)^2(x - i p)^2}
\bigg\} \ .
\label{66c}
\ea
That expression agrees with the source terms in Eq.~(\ref{G4}) showing
that the spectral density and its oscillations are in leading order
the same for the UE and for the GUE.

We have reported in Section~\ref{gue} that the supersymmetry approach,
when evaluated exactly, yields the correct finite-$N$ expression for
the spectral density of the GUE in terms of Hermite polynomials. The
approximation used in (\ref{26}) for the UME defines a new
random-matrix ensemble. That ensemble is identical to the GUE except
that the diagonal elements vanish. Therefore, we expect that exact
expressions for the spectral density can be obtained also in this case
from the supersymmetry approach.

Starting from the approximate form of the action given in
Eq.~(\ref{26}), we have performed the integrals as done in the GOE
case.  We have carried the $1 / N$ expansion beyond the leading-order
terms. We have used the result to construct an expansion of the
spectral density of the UME in terms of Hermite polynomials as was
done for the GUE. We have failed to obtain meaningful results. We
ascribe that to the fact that for technical reasons we have not been
able to include all terms in the $1 / N$ expansion. For a truncated
expansion, the spectral density is expected to be singular. We have
not succeeded in separating these singularities from the oscillatory
behaviour of the spectral density.

\section{Graph-theoretical approach}
\label{Graph theory approach}

\subsection{Background and definitions}
\label{gra}

We start with a few definitions of graph-theoretical concepts which
are helpful for the subsequent discussion.

In a \emph{simple} graph with $N$ vertices, any two different vertices
are connected by at most one edge; no edge begins and ends in the same
vertex. In a \emph{complete} graph, every two different vertices are
connected by an edge. The elements $A_{i,j}= A_{j,i}$ of the symmetric
vertex adjacency matrix $A$ of dimension $N$ equal $1$ ($0$) if the
vertices $i$ and $j$ are connected (not connected,
respectively). For a simple graph, $A_{i,i} = 0$. A directed edge $e =
(j, i)$ connects the vertices $j, i$ and has direction $i \rightarrow
j$. The vertex $i$ is the {\it origin} of $e$ and $j$ is its {\it
  terminus}: $i = o(e),\ j = \tau(e)$. The direction of the edge $\hat
e$ is opposite to that of $e$. The number of directed edges equals
$\sum_{i,j}A_{i,j}$. The matrix $\CB$ with elements $B_{e',e} =
\delta_{o(e'),\tau(e)}$ describes the way the vertices are connected
in the space of directed edges. For a complete graph, the matrix $\CB$
has dimension $N (N - 1)$.

A {\it walk} of length $t$ on the graph is defined by a list of
directed edges $e_1, e_2, \cdots, e_t$ where $B_{e_i,e_{i+1}} \ne
0$. For a $t$-\emph{periodic} walk $B_{e_t,e_{1}} \ne 0$. A {\it
  cycle} is the set of all periodic walks that differ only by a cyclic
permutation of their edges. A cycle is primitive if the edge list is
not a repetition of a shorter list.  Writing $J_{e',e}=
\delta_{e',{\hat e}}$ we define the Hashimoto matrix $\CY = \CB - J$
which connects only directed edges that are not reversed to each
other. Thus, while ${\rm Tr} (\CB^t)$ counts the number of
$t$-periodic walks on the graph, ${\rm Tr} (\CY^t)$ counts the number
of $t$-periodic walks where {\it back-tracking} is not allowed.

To use these definitions for the UME we write the phases $\phi_{\mu
  \nu}$ of (\ref{1}) as $\phi_e$ with $e = (\mu, \nu)$. Following
\cite{OrenI,OrenII}, we include these phases in the definitions of the
matrices $\CB$ and $\CY$. Denoting the set of phases of the matrix
$\CM$ by $\Phi$ we define magnetic edge connectivity matrix
$\CB(\Phi)$ and the magnetic Hashimoto matrix $\CY(\Phi)$ as
\begin{eqnarray}
B(\Phi)_{e',e} &=& \delta_{o(e'), \tau(e)} \exp \left\{\frac{i}{2}(\phi_e +
\phi_{e'}) \right\} \ , \nonumber \\
Y(\Phi)_{e',e} &=& (B(\Phi) - J)_{e',e} = Y_{e',e}(0) \exp \left\{
\frac{i}{2}(\phi_e + \phi_{e'}) \right\} \ .
\end{eqnarray}
The term ``magnetic'' relates the phases to a (fictitious) magnetic
field. Contributions to ${\rm Tr} [\CY(\Phi)^n]$ arise from the set
$\Omega_n$ of all $n$-periodic non-backtracking walks on the graph. In
the magnetic case each walk $\omega$ contributes a phase $\Phi_w =
\sum_{e \in w} \phi_e$ so that
\begin{equation}\label{trY}
{\rm Tr} [\CY(\Phi)^n] = \sum_{w \in \Omega_n} \exp \{i\Phi_w\} \ .
\end{equation}

An important identity due to Bass, generalized by Bartholdi and
extended to the magnetic case in Ref.\cite{OrenI}, connects the
spectra of the UME matrix $\CM$ and of $\CY(\Phi)$. The Bass identity
for complete magnetic graphs is valid for any $\eta \in \mathbb{C}$.
With $I^{(k)}$ the identity matrix of dimension $k$, it reads
\begin{equation} \det(\eta I^{(N(N-1))} -\CY(\Phi)) =
(\eta^2-1)^{\frac{N(N-3)}{2}}\det(I^{(N)}(\eta^2+(N-2) ) -\eta \CM)\ .
\label{bass}
\end{equation}
The identity shows that but for a factor $\eta^N
(\eta^2-1)^{\frac{N(N-3)}{2}}$, the characteristic polynomial of
$\CY(\Phi)$ is proportional to the characteristic polynomial of $\CM$
evaluated at $\frac{\eta^2+(N-2)}{\eta}$. The spectra of the two
matrices are, therefore, related. Let $\sigma(\CM) \doteq
\{\lambda_k\}_{k=1}^N$ ($\sigma(\CY(\Phi)) \doteq
\{\eta_r\}_{r=1}^{N(N-1)}$) be the spectrum of $\CM$ (of $\CY(\Phi)$,
respectively). The factor on the right-hand side of (\ref{bass})
vanishes at $\eta = \pm 1$ with multiplicity $N(N-3)/2$ and at the
$2N$ eigenvalues of ($I^{(N)}(\eta^2+(N-2) ) -\eta M$). These can be
expressed in terms of the $\lambda_k$,
\begin{eqnarray}
\eta_k =\left \{
\begin{array}{l}
  \sqrt{N-2} \  \exp\left\{\pm i \arccos \frac{\lambda_k}{ 2\sqrt{(N-2)}} \right\}
\ \ {\rm if} \ |\lambda_k| \le 2\sqrt{N-2} \ , \\
 \sqrt{N-2} \ \exp\left\{\pm \  {\rm arcosh}  \frac{\lambda_k}
{2\sqrt{(N-2)}} \right\} \ \ \ {\rm if} \ |\lambda_k| > 2\sqrt{N-2} \ .
\end{array} \right .
\end{eqnarray}
From the left-hand side of (\ref{bass}) we
see that the nontrivial part of the spectrum of $\CY(\Phi)$ consists
of two sets of points. The first set is confined to the circle of
radius $\sqrt{N-2}$ in the complex plane. It corresponds to the
spectral points of $\CM$ that lie in the interval $[-2\sqrt{N-2},
  2\sqrt{N-2}]$. We write $\sigma^{R}(\CM) = \{\lambda_k:
|\lambda_k|\le 2\sqrt{N-2}\}$. The second set consists of real pairs
$(\eta_+,\eta_-)$ whose product is $(N-2)$. These correspond to the
spectral points in $\sigma^{NR}(\CM) = \sigma(\CM) - \sigma^{R}(\CM)$.
Matrices $\CM$ for which the entire spectrum belongs to
$\sigma^{R}(\CM)$ are referred to as ``Ramanujan'' matrices - a term
which we borrow freely from an analogous situation in the spectra of
$d$-regular graphs. Conversely, matrices for which at least one of the
spectral points does not lie in $\sigma^{R}(\CM)$ are called
``non-Ramanujan''.

For convenience we scale the UME matrices in the following manner;
\begin{equation}\label{Matrix scaling}
\W = \CM/(2\sqrt{N-2}) \ ,
\end{equation}
so that $\epsilon_k = \lambda_k/ (2 \sqrt{N-2})$ are the
eigenvalues\footnote{We note that with this scaling, the
    semi-circle density (the limit of the mean spectral density for
    large $N$) is defined on the interval $[-1,1]$, in contrast to the
    previous sections where different scaling led to the interval
    $[-2,2]$.} Using the connection of the two spectra given by the
  Bass identity~(\ref{bass}), we can write the normalized trace of
  $\CY(\Phi)^n$ as
\begin{eqnarray}
y_n(\Phi) &: =& \frac{1}{N}\frac{ {\rm Tr} [\CY(\Phi)^n]} {(N-2)^{n/2}}
\nonumber \\
&=& \frac{2}{N}\sum_{\epsilon_k \in \sigma^{\rm R}} \cos(n \arccos \epsilon_k )
+ \frac{2}{N}\sum_{\epsilon_k \in \sigma^{\rm NR}} \cosh  (n \ {\rm arcosh}
\epsilon_k ) \nonumber \\
&& \qquad + \ \frac{N(N-3)}{2N} \frac{1+(-1)^n}{(N-2)^{n/2}}
\nonumber\\
&=& \frac{2}{N}\sum_{k=1}^N T_n( \epsilon_k) +\frac{N(N-3)}{2N}
\frac{1+(-1)^n}{(N-2)^{n/2}} \nonumber \\
&=& \frac{2}{N} {\rm Tr} \left[T_n ( \W )\right ] + \frac{ (N-3)}{2}
\frac{1+(-1)^n}{(N-2)^{n/2}} \ .
\label{eq:pre-trace}
\end{eqnarray}
Here $T_n(x)$  are the Chebyshev polynomials of the first kind, given by
\begin{equation}
T_n(x) = \sum_{r=0}^{\floor{\frac{n}{2}}} d^{(n)}_r x^{n - 2r} \ , \
\hspace{20pt} d^{(n)}_r = \frac{n}{2}(-1)^r 2^{n-2r}\frac{(n-r-1)!}{r!(n-2r)!}.
\label{Chebysh}
\end{equation}
Since we sum over eigenvalues in both $\sigma^{\rm R}$ and
$\sigma^{\rm NR}$, equation~(\ref{eq:pre-trace}) is valid for every
matrix $\CM$ in the UME, independently of whether it is Ramanujan or
not. Equation~(\ref{eq:pre-trace}) has been derived in an alternative
manner by Sodin (see e.g. Section 4.2.6 in \cite{Sodin-2016}).

\subsubsection{Expected value of $y_{2n}(\Phi)$}
\label{umeexp}

Using the expressions (\ref{trY}) and (\ref{eq:pre-trace}) we see that
the ensemble average of $y_{2n}(\Phi)$ is given by
\begin{equation}\label{Expected value}
\Av{y_{2n}(\Phi)} = \frac{1}{N}\frac{1}{(N-2)^n}\Av{{\rm Tr} [
    Y(\Phi)^{2n}]} = \frac{1}{N}\frac{1}{(N-2)^n}\sum_{w \in
  \Omega_{2n}} \Av{\exp\{i \Phi_w\}}.
\end{equation}
The expectation value $\Av{\exp\{i \Phi_w\}}$ vanishes unless the total
phase $\Phi_w$ of the non-backtracking walk $w$ obeys $\Phi_w = 0$.
That condition is met only if each edge is traversed both forwards and
backwards the same number of times. The argument implies
$\Av{y_n(\Phi)} = 0$ for all odd $n$.

We display $y_{2 n}$ pictorially in terms of subgraphs as in
Figure~\ref{fig:types123}. Vertices and edges that contribute to $y_{2
  n}$ are depicted as dots and as bonds, respectively. We characterize
the topology of each subgraph in terms of the Betti number $\beta$ (not
to be confused with the matrix index $\beta$ used in
Section~\ref{intro}). Informally, $\beta$ counts the number of
two-dimensional holes in the planar representation of a graph. For
instance, the subgraphs displayed in Figure~\ref{fig:types123} both have
$\beta = 2$. For $n \ll N$ the dominant contributions to
(\ref{Expected value}) come from walks in which each edge is traversed
only once in each direction. Trees ($\beta = 0$) and single loops
($\beta = 1$) cannot occur as neither allows backtracking. Thus, the
dominant contributions come from walks on subgraphs with $\beta=2$. There
are two such subgraphs\footnote{Other forms of subgraphs with two loops do
  exist. These do not contribute to the same order, however, because
  their edges must be traversed more than twice.} called type I and
type II and shown in Figure~\ref{fig:types123}. To describe these we
introduce the following notation. The total number of vertices on a
subgraph is $v$. Vertices located at the intersections are called {\it
  junctions}. The remaining vertices are called {\it simple
  vertices}. The latter are arranged in the form of linear chains that
begin and terminate in a junction. Each such chain is called a {\it
  branch}. In subgraphs of types I and II there are at most three
branches. The number of simple vertices on a branch is denoted by
letters $p, q, r$. The relation between $n$ (the length of the walk),
$v$ and $p, q, r$ depends on the topology of the subgraph (see
Fig.~\ref{fig:types123}).

\vspace{2mm}

\begin{itemize}

\item \textbf{Type I:} \ There is one junction linked to four edges
  and there are two branches carrying $p$ and $q$ vertices,
  respectively. Each branch forms a loop connected to the
  junction. Then $n = v + 1, \ v = p + q + 1$ but $p, q \ge 2$ to
  avoid backscattering. Note that due to the
    constraints we must have $n \geq 6$.

\item \textbf{Type II:} \ There are two junctions linked to three
  edges each and three branches containing $p, q, r$ vertices,
  respectively. Then $n = v + 1, \ v = p + q + r + 2$. We may have
  either $p, q, r \geq 1$, or $p, q \geq 1$ and $r = 0$ (cyclic). Note
  that even though a periodic walk of length $2n$ traverses the entire
  subgraph with each edge traversed in both directions, the walk of
  length $n$ is not a periodic walk, in contrast to walks in type
  I. Note that due to the constraints we must have $n
    \geq 6$ if $p, q, r \geq 1$ or $n \geq 5$ if one of $p,q$ or $r$
    is zero.

\end{itemize}

\begin{figure}[ht]
\centerline{\includegraphics[width=0.9\textwidth]{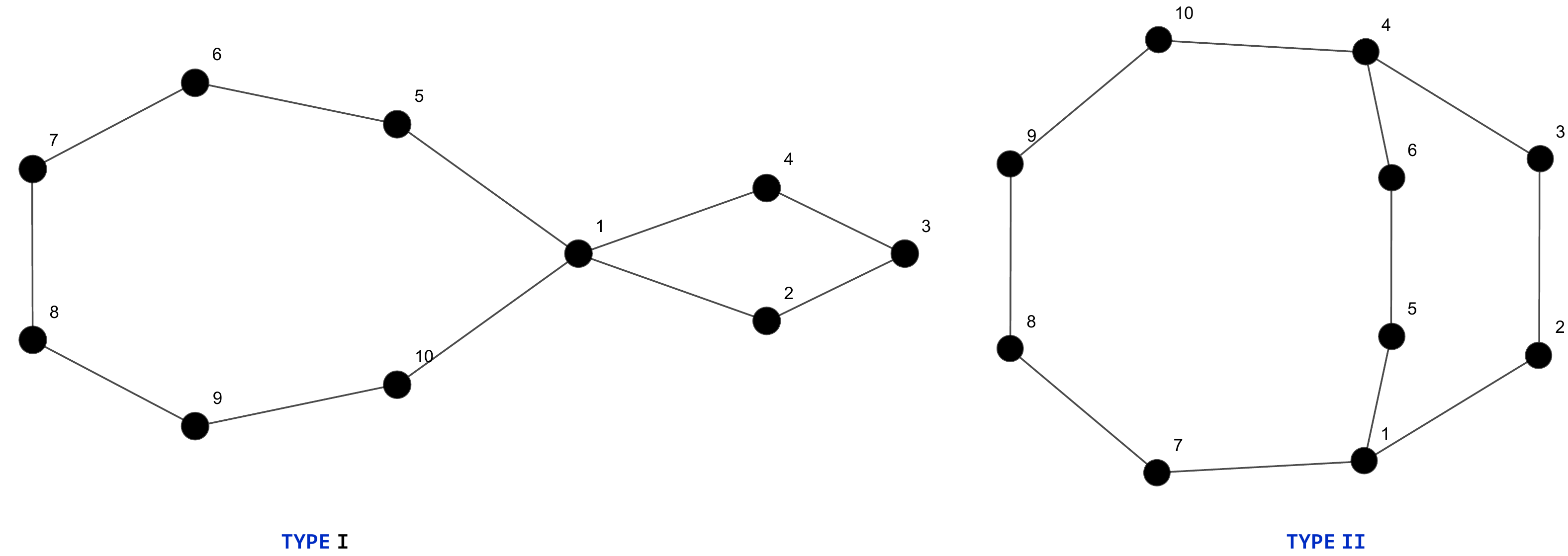}}
\caption{Examples of the two types of subgraphs. Type I with $a=1$,
  $p^+ = (2,3,4)$ and $q^+ = (5,6,7,8,9,10)$ and type II subgraph with
  $a=1$, $b=4$, $q^+=(2,3)$, $p^+ = (5,6)$ and $r^+ = (7,8,9,10)$.
  For explanations, see text}
\label{fig:types123}
\end{figure}

We calculate the contributions to (\ref{Expected value}) from type I
and type II subgraphs. For type I the contribution is determined by the
total number $\Av{{\rm Tr} [\CY(\Phi)^{2n}]}_{\rm I}$ of walks $w =
(e_1,\ldots,e_{2t})$ that trace out the subgraph such that every edge is
traversed once in each direction. Let $a$ denote the vertex at the
junction, $p^+ = (\mu_1,\ldots,\mu_{p})$ the ordered vertices on the
$p$-branch, $p^- = (\mu_{p},\ldots,\mu_1)$ the reversed order, and
analogously for $q$. With $e_1 = (a,\mu_1)$ fixed, there are two
possible traversals of the subgraph given by $w = ap^+aq^+ap^-aq^-$ and
$w= ap^+aq^-ap^-aq^+$. There are $2n$ possible directed edges from
where to start, giving a total of $4n$ walks for each labelled
subgraph. With the total number of vertices given by the dimension $N$ of
$\CM$, there are $N!/(N-v)!$ ways of choosing the vertices. We have to
sum over all values of $(p, q)$ subject to the constraints $p + q = v
- 1$ and $p, q \geq 2$. To avoid overcounting (due to reversing of the
orientations of walks in each branch and the exchange of the two
branches) we have to divide by $s_{\rm I} = 2\times 2 \times 2 =
8$. That gives
\begin{eqnarray}\label{Type I contr}
\Av{{\rm Tr} [\CY(\Phi)^{2n}]}_{\rm I} &=&
\frac{N!}{(N-v)!}\frac{4n}{s_{\rm I}}| \{(p,q) : p,q \geq 2, p+q = v-1\}|
\nonumber \\
&=& \frac{N!}{(N-v)!}\frac{4n}{s_{\rm I}}(v-4),
\end{eqnarray}
which is valid for $n \geq 6$.

To count the number $\Av{{\rm Tr} [\CY(\Phi)^{2n}]}_{\rm II}$ of walks
$w$ that traverse subgraphs of type II, we denote the two vertices at the
junctions by $a$ and $b$ and the branches by $p^{\pm},q^{\pm},r^{\pm}$
similarly as before. For every labelled subgraph, there are two possible
traversals $ap^+bq^-ar^+bp^-aq^+br^-$ and $ap^+b r^-aq^+b p^-ar^+b
q^-$ and $2n$ directed edges from where to start, giving a total of
$4n$ traversals. There are $N!/(N-v)!$ ways of choosing the vertices.
In addition, we have to sum over $p,q,r$ subject to the constraints
(either $p, q, r \geq 1$, or $p, q \geq 1$ and $r = 0$ (cyclic)). To
avoid overcounting we must divide by $s_{\rm II} = 3! \times 2 =
12$. This comes from 3! ways of exchanging the three branches and
from the reflection of the subgraph about its centre
(i.e. exchanging $a \leftrightarrow b$ and relabelling $(p^+,q^+,r^+)
\leftrightarrow (p^-,q^-,r^-)$). Thus, for $n \geq 6$ and $v = n-1$,
\begin{eqnarray}\label{Type II a contr}
\Av{{\rm Tr} [\CY(\Phi)^{2n}]}_{\rm II} & = & \frac{N!}{(N-v)!}
\frac{4n}{s_{\rm II}}\bigg( | \{(p,q,r) : p,q,r
\geq 1, p+q+r = v-2\}| \nonumber \\ & & \hspace{70pt} + 3| \{(p,q) :
p,q \geq 1, p+q = v-2\}|\bigg) \nonumber \\ & = &
\frac{N!}{(N-v)!}\frac{4n}{s_{\rm II}}\bigg(\frac{(v-4)(v-3)}{2} +
3(v-3) \bigg) \ .
\end{eqnarray}
For $n=5$ we only keep the second term on the
    right-hand side and so
\begin{equation}\label{Type II b contr}
\Av{{\rm Tr} [\CY(\Phi)^{2n}]}_{\rm II} = \frac{N!}{(N-v)!}\frac{4n}{s_{\rm II}}
3(v-3), \qquad n =5
\end{equation}
For $n=1,\ldots,4$ we have $\Av{{\rm Tr} [\CY(\Phi)^{2n}]} = 0$ since
there do not exist any non-backtracking paths of length 8 or less in
which all the edges are traversed the same number of times in both
directions. We may now  combine the expressions (\ref{Type I
  contr}), (\ref{Type II a contr}) and (\ref{Type II b contr}) for $n \geq 5$. So, using that $v = n-1$ and $N!/(N-v)! = N^{n-1} - \O(N^{n-2})$
for $N \gg v$ we get
\begin{equation}
\fl \Av{{\rm Tr} [\CY(\Phi)^{2n}]} = \left\{\begin{array}{ll}
N^{n-1} n(n-4)  + \O(N^{n-2}) & n = 5 \ , \\
\frac{nN^{n-1}}{6} \bigg[(n+1)(n-4) + 3(n-5)\bigg] + \O(N^{n-2}) & n \geq 6
\ .
\end{array}\right. 
\end{equation}
The relation (\ref{Expected value}) between $\Av{{\rm Tr}
  [\CY(\Phi)^{2n}]}$ and $\Av{y_{2n}(\Phi)}$ then leads to
\begin{equation}\label{y expectation}
\Av{y_{2n}(\Phi)} = \left\{\begin{array}{ll}
\frac{n}{N^2}(n-4)  + \O(N^{-3}) & n=5 \ , \\
\frac{n}{6N^2} \bigg[ (n+1)(n-4) + 3(n-5)\bigg] + \O(N^{-3}) & n \geq 6 \ .
\end{array}\right.
\end{equation}
Therefore, from Eq.~(\ref{eq:pre-trace}) the expectation
value of the trace of the Chebyshev polynomial is
\begin{equation}\label{Chebyshev subleading}
\Av{{\rm Tr} [T_{2n}(\W)]} =  \left\{\begin{array}{ll}
 \frac{n}{2N}(n-4) + \O(N^{-2}) & n = 5 \ , \\
\frac{n}{12N} \bigg[ (n+1)(n-1) -18\bigg] + \O(N^{-2}) & n \geq 5 \ ,
\end{array}\right.
\end{equation}
whereas for $n<5$ the relation (\ref{eq:pre-trace}) gives $\Av{{\rm
    Tr} [T_{2n}(\W)]} = -
\frac{N}{2}\frac{(N-3)}{(N-2)^n}$.

In comparison to (\ref{Chebyshev subleading}) the equivalent
expectation for the GUE can be obtained by combining the result
(\ref{GUE moments}) with the form of the Chebyshev polynomial
(\ref{Chebysh}) to obtain
\begin{equation}\label{Chebyshev GUE subleading}
\Av{\Tr\left[T_{2n}\left(\frac{\H}{2\sqrt{N}}\right)\right]}_{\rm GUE}
= - \frac{N}{2}\delta_{n,1} + \frac{n(n^2-1)}{12N} +
\O(N^{-3})
\end{equation}
We highlight that for the GUE there is no constant term in
  the $1/N$ expansion. One may see immediately why this is the case
  from the form of the moments (\ref{GUE moments}), which after
  multiplying by a factor of $N$ do not contain any constant
  term\footnote{This is not the case for other Gaussian
    $\beta$-ensembles however.}. Whereas for $n=2$ the UME satisfies
  $\Av{\Tr[T_{2n}(\W)]} \to -\frac{1}{2}$ as $N \to \infty$. We could
  remove this constant term for $n=2$ by changing the scaling of $\W$
  from $(N-2)^{-1/2}$, which arises naturally in
  (\ref{eq:pre-trace}). However, choosing another scaling of the form
  $(N-c)^{-1/2}$ for some constant $c$ will induce an order $\O(1/N)$
  correction (from the leading term, which is 0 for all $n>1$) which
  will result in $\Av{\Tr[T_{2n}(\W)]}$ having constant terms for
  other values of $n$.
  
\subsubsection{Correlations of $y_n(\Phi)$.}\label{Correlations of yn}
From expression~(\ref{trY}) the covariance of the traces of powers of
$\CY(\Phi)$ is given by
\begin{eqnarray}\label{Correlations}
\fl \Cov({\rm Tr} [ \CY(\Phi)^n], {\rm Tr} [ \CY(\Phi)^m]) &:= & \Av{{\rm Tr} [\CY(\Phi)^n] {\rm Tr} [\CY(\Phi)^m]} -
\Av{{\rm Tr}[ \CY(\Phi)^n]} \Av{{\rm Tr} [\CY(\Phi)^m]} \nonumber \\
\fl &  = &  \sum_{w \in \Omega_n} \sum_{w' \in \Omega_m} \Av{\exp\{i(\Phi_w -
\Phi_{w'})\}} - \Av{\exp\{i\Phi_w\}} \Av{\exp\{-i\Phi_{w'}\}} \nonumber \\
\fl & = & |\Omega_{n,m}(0)|.
\end{eqnarray}
Here $\Omega_{n,m}(0) := \{(w,w') \in (\Omega_n,\Omega_m) : \Phi_w =
-\Phi_{w'} \neq 0 \ \forall \ \Phi\}$ denotes the set of pairs of non
back-tracking walks that have nonvanishing opposite phases for every
$\CM \in$ UME.

\vspace{2mm}

\noindent {\it Variance of $y_n(\Phi)$}

\vspace{2mm}

\noindent The dominant contribution to $|\Omega_{n,n}(0)|$ comes from
pairs of walks $w,w'$ that reside on a subgraph with a single loop in
which every edge is traversed exactly once by $w$ and in the opposite
direction by $w'$. Hence there are $v = n$ vertices on the
subgraph. The path $w$ can start from each edge and in either
direction, giving $2n$ possible starting positions and then there are
$N!/(N-v)!$ ways of labelling the vertices. That, however, overcounts
by a factor $s_{\beta=1} = 2n$ since all starting position of $w$ can
also be obtained by relabelling the vertices. For every $w$ we have
$n$ possible walks $w'$ which gives, for $n \geq 3$

\begin{equation}\label{beta 1 contribution}
\fl \Var({\rm Tr} [\CY(\Phi)^n])_{\beta = 1} =
\frac{N!}{(N-v)!}\frac{2n^2}{s_{\beta=1}} 
= n\bigg(N^n - N^{n-1}\frac{n(n-1)}{2} + \O(N^{n-2})\bigg).
\end{equation}

The next-to-leading-order contribution to $|\Omega_{n,n}(0)|$ comes
from pairs of non-backtracking walks on subgraphs containing two loops
($\beta = 2$). As shown in the previous section there exist two types
of subgraphs, for which we require $w$ to traverse each edge of these
subgraphs precisely once and $w'$ to traverse the same subgraph in the
opposite direction. That is not possible for type II subgraphs as
these do not support non-backtracking walks in which the loops are
traversed only once. Thus leaving subgraphs of type I.

To obtain the contribution from the type I subgraphs (where $v =
n-1$), we see that if a walk $w$ starts at a particular edge then
there are two possible traversals of the subgraph, which in the
notation of the previous section, are given by $w= ap^+aq^+$ and
$w=ap^+aq^-$. There are $n$ possible starting edges for the walk $w$
and hence a further $n$ possible choices for $w'$. We may also relabel
the vertices, which gives a factor of $N!/(N-v)!$ but must gain
mitigate for the overcounting. Thus we must divide through by a factor
of 4, coming from the possibility of swapping the two loops and the
two different possible traversals of the subgraph by $w$. So
altogether, for $n \geq 6$
\begin{eqnarray}\label{beta 2 contribution}
\Var({\rm Tr} [\CY^n])_{\beta = 2} & = &
\frac{N!}{(N-v)!}\frac{2n^2}{4}| \{(p,q) : p,q \geq 2, p+q =
v-1\}| \nonumber \\ & = & \frac{n^2}{2} (n-5)(N^{n-1} + \O(N^{n-2})) \ .
\end{eqnarray}
Combining (\ref{beta 1 contribution}) and (\ref{beta 2 contribution})
leads to
\[
\Var({\rm Tr} [\CY(\Phi)^n]) = nN^n - 2n^2 N^{n-1} +
\O(N^{n-2}) \ .
\]
Therefore, using that $(N - 2)^{-n} = N^{-n} - 2N^{-n-1} + \ldots$ we
have from (\ref{eq:pre-trace}) that
\begin{eqnarray}
\Var(y_n(\Phi)) & = & \frac{1}{N^2}\Var({\rm Tr} [\CY(\Phi)^n])(N^{-n}
- 2N^{-n-1} + \O(N^{-n-2})) \nonumber \\ & = & \frac{n}{N^2} -
\frac{2n(n+1)}{N^3} + \O(N^{-4}).
\end{eqnarray}

This in turn implies that
\[
\Var\left({\rm Tr} [T_n(\W)]\right) = \frac{n}{4} - \frac{n(n+1)}{2
  N} + \O(N^{-2}).
\]
The leading term in this expression coincides with the result of
Johansson \cite{Johansson-1998} for the GUE. The manner in which it
has been derived (i.e., counting the number of non-backtracking walks
on single loops) is the same as in \cite{Sodin-2016} (see also
\cite{Anderson-2009,Kusalik-2007,Schenker-2007}). In fact,
  this approach is capable of showing that all the joint moments of
  the type $\Av{ {\rm Tr} [T_{1}(\W)]^{a_1} {\rm
      Tr}[T_{2}(\W)]^{a_2}\ldots {\rm Tr}[T_{k}(\W)]^{a_k}}$ with $a_i
  >0$, coincide in the large $N$ limit with averages of the form
  $\Av{Z_1^{a_1} Z_2^{a_2} \ldots Z_k^{a_k}} = \Av{Z_1^{a_1}}\Av{
    Z_2^{a_2}} \ldots \Av{Z_k^{a_k}}$, where the $Z_n$ are independent
  and identically distributed Gaussian random variables with zero mean
  and variance $\sigma_n^2 = n / 4$.
\vspace{2mm}

\noindent {\it Covariance of $y_n(\Phi)$ and $y_m(\Phi)$}

\vspace{2mm}

\noindent The covariance is obtained by setting $n \neq m$ (we take $n
> m$ without loss of generality) in (\ref{Correlations}) and computing
the number of pairs of non-backtracking walks $|\Omega_{n,m}(0)|$ of
length $n$ and $m$ which retrace each other. Obviously $\Cov({\rm Tr}
[\CY(\Phi)^n],{\rm Tr} [\CY(\Phi)^m]) =0$ if $(n-m)$ is odd since then
the phases $\Phi_w$ and $\Phi_{w'}$ cannot be equal in general.

For even $(n-m)$ the leading contribution comes from subgraphs of type
I in which $w \in \Omega_n$ traverses one of the loops twice in
opposite directions whilst $w' \in \Omega_m$ only traverses the other
loop once. For example $w = ap^+aq^+ap^-$ and $w' = aq^-$. Thus one
loop contains $m$ edges, the other $(n-m)/2$ edges and there are a
total of $v = (n+m-2)/2$ vertices. The number of such pairs $(w,w')$
is given by noting that the $w$ walk has $2n$ possible ways of
traversing the subgraph, given by the 2 possible orders of traversing
the $(n-m)/2$ loop and the $n$ possible starting edges. Then for each
$w$ there are $m$ ways of choosing the $w'$ walk. Relabelling the
vertices also gives a factor of $N!/(N-v)!$ however we must then
divide by a factor of $2$ to account for reversing the orientation of
the loop that is traversed twice by $w$. Altogether this gives
\begin{eqnarray}
\fl \Cov({\rm Tr} [\CY(\Phi)^n], {\rm Tr} [\CY(\Phi)^m]) & = & N^v \frac{2nm}{2}+ \O(N^{v-1}), \nonumber \\
\fl & = &  nmN^{(n+m-2)/2} + \O(N^{(n+m)/2}) \ \ |n-m| \geq 6, {\rm even}
\end{eqnarray}
which in turn leads to
\[
\Cov(y_n(\Phi),y_m(\Phi)) = \frac{nm}{N^3} + \O(N^{- 4}), \qquad |n-m| \geq 6, {\rm even}
\]

\subsection {Spectral density and two-point correlation function}

\subsubsection{Trace formula for the spectral density}
\label{Trace formula}

Given any matrix $\CM \in {\rm UME}$, the density of the scaled
eigenvalues $\epsilon_k$ (see (\ref{Matrix scaling}))
\begin{equation}
\rho_{\CM}(\epsilon)=\frac{1}{N}\sum_{k=1}^N\delta(\epsilon-\epsilon_k)
\end{equation}
is a distribution which we study by applying it to test functions
(observables) that are analytic on the entire real line. We restrict
our attention to this space of functions because the maximum (scaled)
spectral radius (achieved by setting $M_{\mu\mu} = 1$ for all $\mu
\neq \nu$) for the UME is given by $[(N-1) / (2\sqrt{N-2})] \thicksim
\sqrt{N}$. 

Let $f(x)$ be an allowed test function. It can be expanded in terms of
Chebyshev polynomials:
\begin{equation}
f(\epsilon)=\sum_{m=0}^{\infty} f_m T_m(\epsilon) \ \ \ ;
\ \ f_m=\frac{2-\delta_{m,0}}{\pi}\int_{-1}^{1}\frac{f(\epsilon)
T_m(\epsilon)}{\sqrt{1-\epsilon^2}}{\rm d}\epsilon\ .
\label{def:fm}
\end{equation}
The series converges on the entire real line, since it is a
rearrangement of the Taylor expansion. We note, however, that the
coefficients $f_m$ are derived from the restriction of $f(\epsilon)$
to the interval $[-1,1]$.

Recalling (\ref{eq:pre-trace}) and the fact that $y_0(\Phi)=(N-1),
y_1(\Phi)=y_2(\Phi)=0$ we can write
\begin{eqnarray}
\frac{1}{N} {\rm Tr} \left[ f\left(\W \right)\right] & = &
\frac{1}{N}\sum_{n=0}^{\infty}f_n {\rm Tr} [T_n(\W)] \nonumber \\
& = & \frac{1}{2}\sum_{n=3}^{\infty}y_n(\Phi) f_n + \frac{(N-1)}{2}f_0
-\frac{(N-3)}{4}\sum_{n=0}^{\infty} f_n\frac{1+(-1)^n}{(N-2)^{\frac{n}{2}}} \ .
\nonumber
\end{eqnarray}
The sum over $y_n(\Phi)f_n$ converges because $|y_n|<(2 \sqrt{N-1}
)^n$. Therefore, inserting the expression (\ref{def:fm}) for the
coefficients $f_n$, using the absolute convergence of the series to
exchange summation and integration, and noting that $T_0(\epsilon) =
1$ leads to
\begin{eqnarray}\label{spectraltest}
\frac{1}{N} {\rm Tr} [f(\W)] & = &
\frac{1}{2}\sum_{n=3}^{\infty}y_n(\Phi) f_n \nonumber \\
\qquad &+& \frac{1}{\pi}\int_{-1}^{1}d\epsilon
\frac{f(\epsilon)}{\sqrt{1-\epsilon^2}} \Bigg\{ (N-2)
-\frac{(N-3)}{2}\sum_{n=0}^{\infty} T_n(\epsilon)
\frac{1+(-1)^n}{(N-2)^{\frac{n}{2}}} \Bigg\} \nonumber \\ 
& = & \frac{1}{2}\sum_{n=3}^{\infty} y_n(\Phi) f_n+ \int_{-1}^{1} d\epsilon
\left \{\frac{2}{\pi} \sqrt{1 - \epsilon^2} \frac{1}{1 + \frac{1}{N -
    2} - \frac{4}{N - 1} \epsilon^2}\right \} f(\epsilon) \ ,
\nonumber \\
\end{eqnarray}
where going to the final line we have made use of the identity
\[
\sum_{n=0}^{\infty} T_n(x)(y^n +(-y)^n) =
\frac{2(1+y^2)-4x^2y^2}{(1+y^2)^2-4x^2y^2} \ , \ \ \ \ \ |x|,|y|\le 1.
\]
For appropriate test functions $f(x)$ and for any $\CM \in$ UME
(\ref{spectraltest}) is an exact, absolutely convergent trace formula
and provides the correct manner in which one can apply the formal
trace formulae derived in \cite{OrenI}.

The absolute convergence of (\ref {spectraltest}) permits computing
the ensemble average term by term. Due to (\ref{y expectation}), the
infinite sum over $\Av{y_n(\Phi)}f_n$ is then of lower order in $N$
than the leading term, the integral, which does not depend on any
periodic walk information. Therefore, in the limit of large $N$, the
expression within the curly brackets in the integrand can be
interpreted as the mean spectral density and indeed, in this limit, it
converges to the semi-circle density.

\subsubsection {Mean spectral density for finite $N$}

The term appearing in the curly brackets in (\ref{spectraltest}),
\begin{equation}\label{Mean density}
\langle\rho(\epsilon)\rangle=\frac{2}{\pi}
\sqrt{1 - \epsilon^2} \
\frac{1}{1 + \frac{1}{N - 2} - \frac{4}{N - 1} \epsilon^2},
\end{equation}
can be identified as the mean spectral density. It includes $1 / N$
corrections to the semi-circle law. However, the integration domain is
the interval $[-1,1]$ so that the possible contributions for finite
$N$ due to the non-Ramanujan part of the spectrum can only come from
the periodic orbit sum in (\ref {spectraltest}). In Figure \ref{fig2}
we compare the numerical mean density for $N=10$ with the expression
given in (\ref{Mean density}), as well as the semi-circle
density. None accounts for the oscillations that are due to the
contributions from the periodic orbit sum.

To proceed further we introduce a family of $\delta$-like functions:
\begin{equation}
\delta_{N^{\star}}(x;\xi) = \frac{1}{1+\frac{\pi}{2}
  \sqrt{1-\xi^2}}\sum_{m=0}^{N^{\star}} T_m(x)T_m(\xi) \ ,
\ \ \ \ |\xi| < 1-\frac{c}{{N^{\star}}},
\label{deltam}
\end{equation}
where $c$ is a numerical constant, and ${N^{\star}}$ a large but
finite integer. We consider $\delta_{N^{\star}}(x;\xi)$ a function of
$x$ that depends on the parameter $\xi$, with $\xi$ restricted so as
to have a finite distance from the end points of the interval $[-1,
  1]$. The function $\delta_{N^{\star}}(x;\xi)$ is concentrated about
the point $x=\xi$ where it takes its maximum value $(N^{\star} + 1) /
(2(1+\frac{\pi}{2} \sqrt{1-\xi^2}))$ with full width at half maximum
$[2\sqrt{3} / N^{\star}] \sqrt{1-\xi^2}$.  The integral over the
domain $[-1,1]$ is unity up to a correction of order
$\mathcal{O}(1/N^{\star})$. For $x$ values sufficiently far from
$\xi$, $\delta_{N^{\star}}(x;\xi)$ oscillates about $0$ with mean
amplitude of order $1$. The function $\delta_{N^{\star}}(x;\xi)$ is a
polynomial in $x$ and therefore it belongs to the class of test
functions relevant to our discussion. Because of these properties the
function
\begin{equation}
\bar{\rho}_{N^{\star}}(\xi) = \frac{1}{N} \Av{{\rm Tr}
  \left[\delta_{N^{\star}}\left(\W;\xi\right)\right]} \ ,
\ \ \ \ |\xi|<1-\frac{c}{{N^{\star}}}
\end{equation}
provides a smooth mean spectral density. The smoothing is done over
spectral intervals of order $1 / N^{\star}$ that lie within the domain
$|\xi|<1-\frac{c}{{N^{\star}}}$. The coefficients $f_m$ for
$\bar{\rho}_{N^{\star}}(\xi)$ are obviously proportional to $T_m(\xi)$
for $m \le {N^{\star}}$ and zero otherwise. Thus the oscillatory part
of the mean spectral density is
\begin{equation}
\hspace{-10mm} \bar{\rho}_{N^{\star}}(\xi) - \langle \rho (\xi)
\rangle = \frac{1}{ \frac{2}{\pi} + \sqrt{1-\xi^2}}\sum_{m=3}^{
  \lfloor {N^{\star}}/2 \rfloor }\Av{y_{2m}(\Phi)}T_{2m}(\xi)\left(1
+\O\left(\frac{1}{N^{\star}}\right)\right) \ .
\label{smoothrow}
\end{equation}
Our numerical results (Figure~\ref{fig2}) suggest that in the interval
$[-1,1]$, the oscillatory part of the mean spectral density possesses
$N$ oscillations about the mean, and this indicates that the parameter
$N^{\star}$ should be at least $N$. Thus, in order to use
(\ref{smoothrow}) to match the data, one needs to compute
$\Av{y_{2m}(\Phi)}$ at least for all $m \le N$. Unfortunately the
combinatorial computation (\ref{y expectation}), which provides an
estimate for $\Av{y_{2m}(\Phi)}$, is valid only for $m\le \sqrt{N}$.
\begin{figure}[ht]
\centerline{\includegraphics[width=0.7\textwidth]{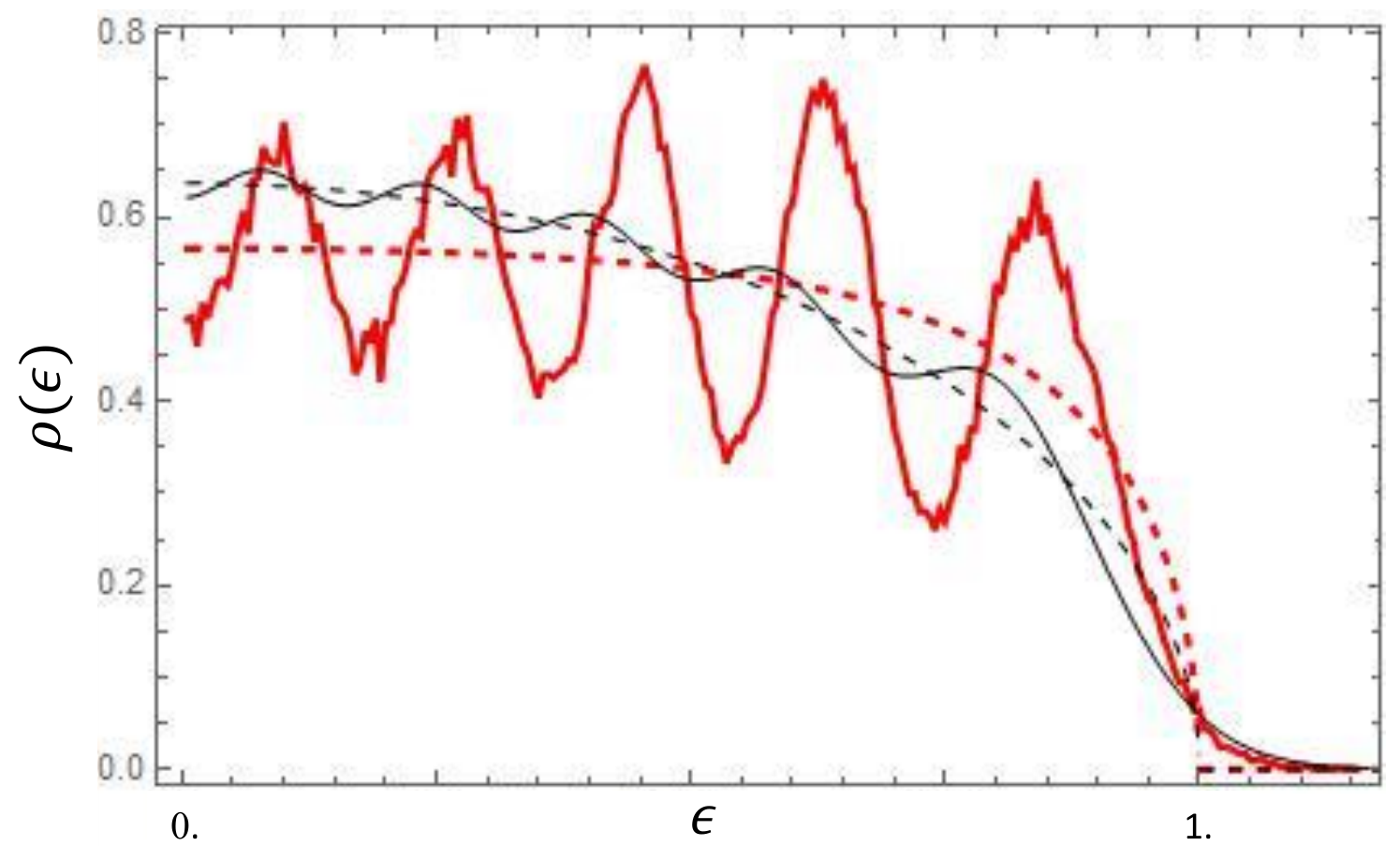}}
\caption{spectral densities for N=10. Thick full line: numerical data
  (20,000 realizations); Thick dashed line: the modified mean spectral
  density (\ref{Mean density}); Thin full line: GUE density; Thin
  dashed line: semi-circle density. Here $\epsilon$ is the normalized
  spectral parameter so that the asymptotic support of the densities
  is $[-1,1]$.}
\label{fig2}
\end{figure}

\subsubsection {Spectral form factor}

For the discussion in the present Section it is convenient to map the
spectral interval $-1\le \epsilon \le 1$ onto the unit circle by
$\theta = 2\arccos(\epsilon)$. We restrict the attention to Ramanujan
matrices. Evaluating, formally, $N^{-1}{\rm Tr} [\delta(\epsilon -
  \W)]$ in (\ref{spectraltest}) gives us the spectral density, which
may be written as
\[
\rho_{\CM}(\theta) =  \Av{ \rho_{\CM}(\theta)} +\tilde{\rho}_{\CM}(\theta).
\]
The mean term here is given by the transformation of (\ref{Mean
  density}) and the oscillatory term by the sum over the $y_n(\Phi)$
in (\ref{spectraltest})
\begin{eqnarray}
\label{rhophi}
\left \langle \rho_{\CM}(\theta) \right \rangle &=& \frac{1}{\pi}
\sin^2\left(\frac{\theta}{2}\right) \frac{1}{1 + \frac{1}{N - 2} -
  \frac{4}{N - 1} \cos^2(\frac{\theta}{2})} \ , \nonumber \\ \tilde
\rho_{\CM}(\theta) &=& \frac{1}{2\pi} \sum_{n=3}^{\infty} y_t(\Phi)
\cos\left( n \frac{\theta}{2} \right) \ .
\end{eqnarray}
The spectral two-point correlation function is defined as
\[
R_2(\eta) = \frac{1}{2 \pi N^2} \sum_{i \neq j} \left \langle
\delta(\eta - (\theta_j - \theta_i) \right \rangle \ .
\]
Using Eq.~(\ref{rhophi}) we can write this as
\begin{eqnarray}
 R_2(\eta) &=& \int_{-\pi}^{\pi} \frac{{\rm d} \theta}{2 \pi} \left
 \langle \tilde \rho_{\CM}\left(\theta + \frac{\eta}{2}\right) \tilde
 \rho_{\CM}\left(\phi - \frac{\eta}{2}\right) \right \rangle -
 \frac{1}{2 \pi N} \delta(\eta)\nonumber \\ &+& \int_{-\pi}^{\pi}
 \frac{{\rm d} \theta}{2\pi} \rho_{\CM} \left(\theta +
 \frac{\eta}{2}\right) \rho_{\CM}\left(\theta - \frac{\eta}{2}\right),
\end{eqnarray}
where we have substituted $\rho(\phi)$ for $\left \langle
\rho_{\CM}(\theta)\right \rangle$.  Now,
\[
\int_0^{2\pi} \frac{{\rm d} \theta}{2 \pi} \left \langle
\tilde{\rho}_{\CM} \left(\theta + \frac{\eta}{2}\right)
\tilde{\rho}_{\CM} \left(\theta - \frac{\eta}{2}\right) \right \rangle
\]
\[
= \frac{1}{4\pi^2} \sum_{n,m} \Av{y_n(\Phi) y_m(\Phi)} \int_0^{2\pi}
\frac{{\rm d} \theta}{2 \pi} \cos\left(\frac{n}{2}\left( \phi +
\frac{\eta}{2} \right)\right)\cos\left(\frac{m}{2} \left(\phi -
\frac{\eta}{2} \right) \right)
\]
\begin{equation}
= \frac{1}{8 \pi^2} \sum_n \Av{y_n(\Phi)^2}
\cos\left(\frac{n\eta}{2}\right)\ .
\end{equation}
To leading order,
\[
\int_{-\pi}^{\pi} \frac{{\rm d} \theta}{2 \pi}
\rho_{\CM}\left(\theta + \frac{\eta}{2}\right) \rho_{\CM} \left(\theta
- \frac{\eta}{2}\right) = \frac{1}{4 \pi^2} + \frac{\cos(\eta)}{8
  \pi^2} \bigg(1 - \frac{2}{N - 2} \bigg) + \O \left(\frac{1}{N^2}
\right) \ .
\]
The spectral form factor is defined as
\begin{equation}
\label{formf}
 K_2(t;N) = \Av{\frac{1}{N} \left | \sum_j \exp\{i\theta_j t\} \right |^2}
 - N \ .
 \end{equation}
Substituting and collecting the terms, we get
\begin{equation}
\label{formf1}
K_2(t; N) = \frac{N}{4} \bigg(1 - \frac{2}{N - 2} \bigg) \delta_{n, 1}
+ \frac{N}{4} \left \langle y_{2t}(\Phi)^2 \right \rangle \ .
\end{equation}
The spectral information used to calculate the form
factor~(\ref{formf1}) consists of the entire set of values within the
asymptotic support. The spectral density is not constant. Therefore,
Eq.~(\ref{formf1}) cannot be compared directly with the GUE ``local''
form factor. The latter is defined for the unfolded spectrum with
constant mean spectral density. For a comparison one has to transform
the GUE result using a convolution integral introduced previously in
\cite{OrenII}. The resulting GUE expression for $K_2(t,N)$ is a
function of the scaled variable $\tau = t / N$ and is given by
\begin{equation}
\fl
K_2 ^{\rm GUE}(\tau) = \left \{\begin {array}{ll}
 \tau (1 - \frac{2}{\pi} \arcsin \sqrt{\frac{\tau}{2}})
 + \frac{1}{\pi} \left(2 \arcsin \sqrt{\frac{\tau}{2}} - \sin (2\arcsin
\sqrt{\frac{\tau}{2}}) \right) \ , \ \  & \tau \le 2 \ , \\
 1 \ ,   & \tau > 2 \ .
\end{array} \right .
\end{equation}
We observe that the leading term in (\ref{formf1}) is $\tau$-inherited
from the GUE expression. The next-order terms consist of odd powers of
$\tau^{\frac{1}{2}}$. The numerical data and the expression for
$K_2^{\rm GUE}(t / N)$ above are compared in Fig. \ref{fig:formfactor}
for $N = 20$. The agreement is very good.

\begin{figure}[ht]
\centerline{\includegraphics[width=0.6\textwidth]{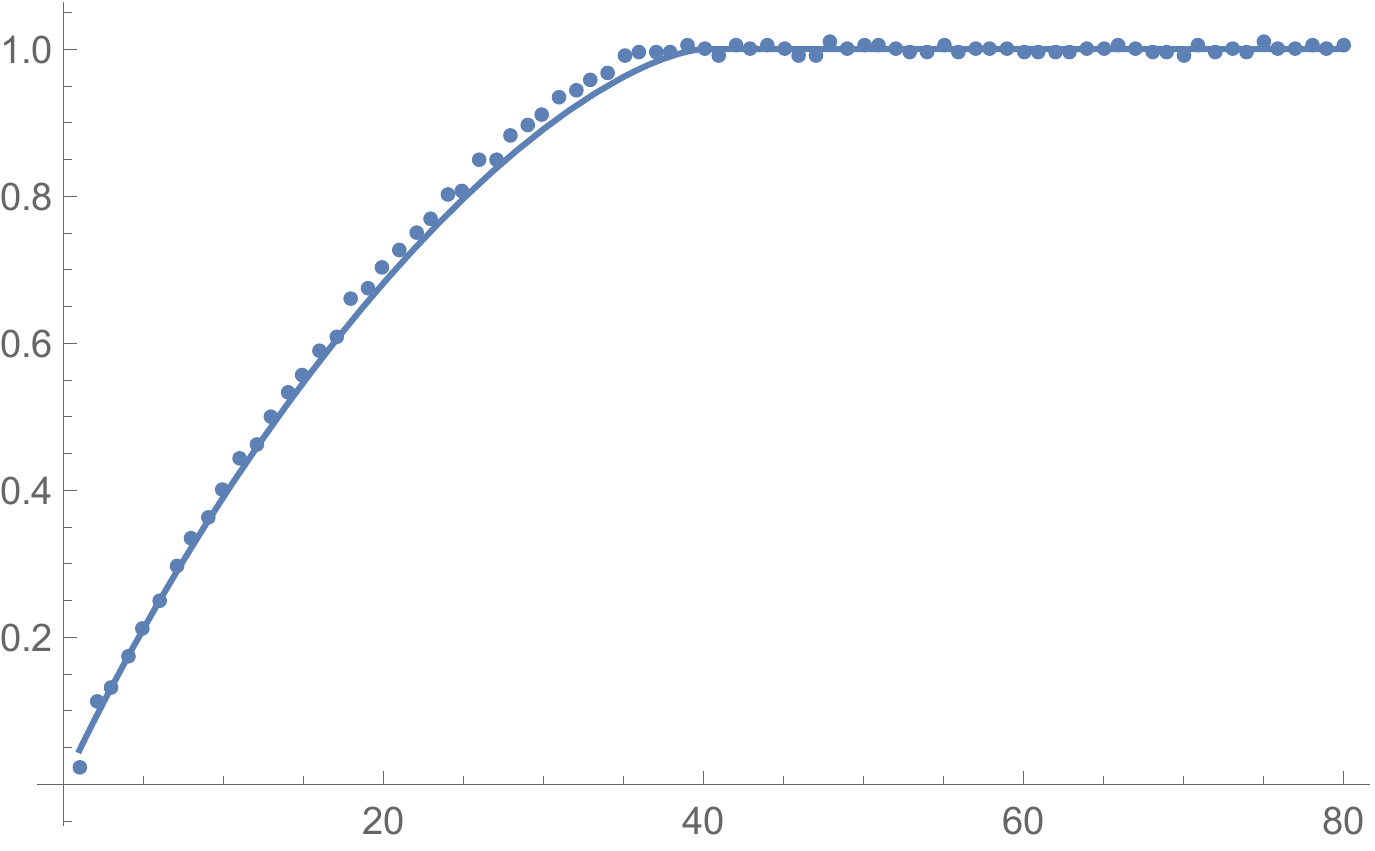}}
\caption{The numerical form factor (dots) and the GUE expression
  (line) for $N=20$ }
\label{fig:formfactor}
\end{figure}

\section{Brownian motion approach}\label{BM approach}

In the previous section we obtained expressions for the average
spectral moments of the UME in a basis of Chebyshev polynomials (see
Eqn. (\ref{Chebyshev subleading})). By inverting the relation
(\ref{Chebysh}) one may obtain the monomials from this basis and hence
obtain expressions for the average value of the expressions
$\frac{1}{N}\Av{\Tr[\W^{2n}]}$, whose leading terms correspond to the
moments of the semicircle distribution $\frac{2}{\pi}\sqrt{1-
  \lambda^2}$. The result (\ref{Chebyshev subleading}) therefore
allows us to obtain the deterministic deviations for averages
$\frac{1}{N}\Av{\Tr[f(\W)]}$ of polynomial test functions over the UME
from the average over the semicircle distribution, i.e.
\[
\frac{2}{N\pi}\int_{-1}^{1} f(\lambda) \sqrt{1- \lambda^2} d \lambda.
\] 
In the present section we turn to analysing the behaviour of
fluctuations of the traces of Chebyshev polynomials about their
mean. In particular we will show, using a Brownian motion approach,
that in the large $N$ limit the random variables $\Tr[T_n(\W)]$ behave
like independent Gaussian random variables (as was briefly highlighted
at the end of Section \ref{Correlations of yn}) and provide bounds on
the rates of this convergence utilising results derived from Stein's
method \cite{Meckes-2009}. This is
  encapsulated in our main result of this section, which is Theorem
  \ref{Wasserstein distance estimation}.

\vspace{5pt}

Let us therefore first define our centred traces of Chebyshev
polynomials by removing their mean
\begin{eqnarray}\label{Centered Chebyshev}
\fl F_n(\CM) & :=&  \Tr[T_n(\W)] - \Av{\Tr[T_n(\W)]} = \frac{1}{2(N-2)^{n/2}} \left(\Tr[\CY(\Phi)^n] - \Av{\Tr[\CY(\Phi)^n]}\right) \\
\fl & =& \frac{1}{2(N-2)^{n/2}}\sum_{w \in \Omega_n} (\exp\{i \Phi_w\} - \Av{\exp\{i \Phi_w\} }) = \frac{1}{2(N-2)^{n/2}}\sum_{w \in \Lambda_n} \exp\{i \Phi_w\}. \nonumber 
\end{eqnarray}
Here $\Lambda_n := \Omega_n \setminus \{ w \in \Omega_n : \Phi_w = 0
\ \forall \ \Phi \in {\rm UME} \}$ is the set of periodic
non-backtracking walks of length $n$ whose phases are not identically
0 for every member of the UME.

As was mentioned in the introduction, there are many examples of
random matrix ensembles in which the linear-statistic $L_f(\H)$
converges in distribution to a Gaussian random variable with a
universal variance (which we do not state here but may be found in
\cite{Johansson-1998,Schenker-2007} for example) as $N$ becomes
large. Many works have sought to establish regularity conditions for
these linear statistics, i.e. finding the class of test functions for
which $L_f(\H)$ converges to a Gaussian random variable in the large
$N$ limit (see e.g. \cite{Sosoe-2013} and references therein).

In the case of polynomial functions $f$ of order $k$ this is
equivalent to showing that the joint distribution of the first $k$
traces of Chebyshev polynomials $\{F_n(\H)\}_{n=1}^k$ converge in
distribution to independent Gaussian distributed random variables. We
are going to show that this result is also true of the UME, except
that, as noted in Section \ref{Trace formula}, we have $\Tr[\CY(\Phi)]
= \Tr[\CY(\Phi)^2] = 0$ and therefore $F_0(\CM) = F_1(\CM) = F_2(\CM)
\equiv 0$. For this reason it only makes sense to investigate
$F_n(\CM)$ for $n \geq 3$. One may consider this to be a substantial
deviation from the GUE, however it is shown in \cite{Maciazek-2015}
that one may always scale the first two moments in the Gaussian
$\beta$-ensembles in such a way that they may be considered
independently of all other moments.

For the UME (and more generally for Wigner matrices) the convergence
of $F_n(\CM)$ to independent Gaussians (see Theorem \ref{Convergence
  theorem} below) is well known (see e.g. \cite{Sodin-2016} for
instance) and can be obtained by showing the convergence of all
moments via combinatorial methods.
\begin{theorem}\label{Convergence theorem}
For $\CM$ distributed according to the UME and $k$
fixed\footnote{Although we are not aware of a specific result for the
  UME, it has been shown for real Wigner matrices by Sinai and
  Soshnikov \cite{Sinai-1998} that the traces $\Tr[\H^k]$ are still
  Gaussian distributed for growing $k$ (such that $k <
  N^{\frac{1}{2}}$) and one should expect a similar outcome here.}
\[
F(\CM) = (F_3(\CM),\ldots,F_k(\CM)) \to Z = (Z_3,\ldots,Z_k)
\]
in distribution as $N \to \infty$, where $Z_n$ are iid Gaussian
variables with mean zero and variance $\Av{Z_n^2} = \frac{n}{4}$.
\end{theorem}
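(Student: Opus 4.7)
The plan is to establish the joint convergence via the method of moments. Specifically, I would show that for every tuple of non-negative integers $(a_3,\ldots,a_k)$,
\[
\Av{F_3(\CM)^{a_3} \cdots F_k(\CM)^{a_k}} \longrightarrow \prod_{n=3}^k \Av{Z_n^{a_n}} \quad \text{as } N \to \infty,
\]
and then invoke the fact that the multivariate Gaussian distribution is uniquely determined by its moments. Since the $Z_n$ are independent Gaussians with variance $n/4$, the right-hand side vanishes unless every $a_n$ is even, in which case it equals $\prod_n (a_n-1)!!\,(n/4)^{a_n/2}$ by Isserlis' theorem.

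The first step is to expand each factor $F_n(\CM)$ using (\ref{Centered Chebyshev}) as the normalized sum $(2(N-2)^{n/2})^{-1}\sum_{w \in \Lambda_n}\exp\{i\Phi_w\}$, so that the joint moment becomes a multi-sum over tuples of non-backtracking walks whose expectation factor is nonzero only when the walks in the tuple can be collectively grouped so that every undirected edge of the underlying subgraph is traversed the same number of times in each direction. The second step classifies these groupings according to the topology of the underlying subgraphs and the sizes of the clusters of mutually phase-cancelling walks, exactly as in Section \ref{Correlations of yn}. The dominant contribution arises when the walks are paired into pairs of size two, each pair sitting on a single loop ($\beta=1$) and traversing the same edges in opposite directions; groupings that involve subgraphs of higher Betti number or clusters of three or more walks carry extra constraints on vertex labels and are suppressed by at least one factor of $1/N$. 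Moreover, by the covariance calculation of Section \ref{Correlations of yn}, pairs consisting of walks of unequal length contribute $\O(N^{-3})$ while equal-length pairs contribute $n/N^2$, so the only surviving pairings match walks within each fixed-$n$ group. This forces every $a_n$ to be even and produces exactly $(a_n-1)!!$ pairings at each $n$, each contributing $n/4$, reproducing the Gaussian Isserlis formula.

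The main technical obstacle is the uniformity of the combinatorial control over all subleading configurations, in particular ruling out exotic clusters in which three or more walks drawn from possibly different $F_n$-groups collectively cancel their phases on a single connected subgraph. One must verify that the gain from having more walks share edges is always outweighed by the loss from having fewer independently-summable vertex labels. This requires a general bound, extending the $\beta$-bookkeeping of Section \ref{Correlations of yn}, relating the number of free vertices in a labelled subgraph to its genus and to the number of edge-coincidences enforced by the phase-cancellation condition. Once that bound is in place, the Gaussian moments are recovered in the limit and the theorem follows by the classical moment-determinacy criterion for multivariate Gaussians.
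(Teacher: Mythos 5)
Your proposal follows essentially the same route as the paper: the paper treats this theorem as a known result (citing Sodin's work) obtained by precisely the combinatorial method of moments you outline, i.e.\ expanding the $F_n(\CM)$ as sums over non-backtracking walks, observing that the dominant phase-cancelling configurations are pairs of equal-length walks retracing a single loop ($\beta=1$) so that the joint moments factorize into the Gaussian Wick form with variance $n/4$, and concluding by moment determinacy. The cluster/higher-Betti-number bound you flag as the remaining technical step is the standard suppression argument in this setting and is consistent with the $\beta$-bookkeeping the paper itself uses for its variance and covariance computations in Section \ref{Correlations of yn}.
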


However the combinatorial approach does not easily lend itself to
obtaining rates of convergence. To the best of our knowledge, the only
result that discusses rates of convergence for Wigner matrices
(including the UME) is by Chatterjee \cite{Chatterjee-2007}, in which
he opts instead for an analytical approach that combines Stein's
method with estimates of second order Poincar\'{e} inequalities. In
contrast, our result in Theorem \ref{Wasserstein distance estimation}
shows that one may still obtain similar convergence rates using
combinatorial procedures and, moreover, one only requires estimates on
finite moments of the matrix elements to achieve this.

\vspace{5pt}

In the present section we follow an Brownian motion approach pioneered
by Dyson in his seminal article \cite{Dyson-1962}, which outlined an
alternative description for the canonical Gaussian ensembles. His
insight was that one could replace the (static) Gaussian matrix
entries with independent (dynamical) Ornstein-Uhlenbeck processes such
that the equilibration in the large time limit corresponds to the
original Gaussian ensemble. The motion in the entries then induces a
corresponding stochastic motion in the eigenvalues, now known as Dyson
Brownian motion, whose stationary distribution is then given by the
joint probability density function for the eigenvalues of the Gaussian
ensemble.

In the setting of the UME (See Eq. (\ref{UME Definition})) we define
our motion such that each of the $N(N-1)/2$ phases $\phi_{\mu\nu}$ is
an independent standard Brownian motion on the torus $([0,2\pi))^{N
    (N-1) / 2}$. Thus, in a small time $\delta s$ each phase moves an
  amount $\delta \phi_{\mu\nu} := \phi_{\mu\nu}' - \phi_{\mu\nu}$,
  which is characterised by its drift and diffusion, given by the two
  respective moments
\begin{eqnarray}
\label{Phase drift} \E[\delta \phi_{\mu\nu} | \Phi] & := & \int d \Phi' (\phi'_{\mu\nu} - \phi_{\mu\nu}) \rho(\Phi \to \Phi';\delta s) =  \O(\delta s^2)  \\
\E[\delta \phi_{\mu\nu} \delta \phi_{\mu'\nu'} | \Phi] & := & \int d
\Phi' (\phi'_{\mu\nu} - \phi_{\mu\nu})(\phi'_{\mu'\nu'} -
\phi_{\mu'\nu'}) \rho(\Phi \to \Phi';\delta s) \nonumber \\
\label{Phase diffusion}  & & \ \ \ = 2(\delta_{\mu\mu'}\delta_{\nu\nu'} - \delta_{\mu\nu'}\delta_{\nu\mu'})\delta s + \O(\delta s^2)
\end{eqnarray}
and all higher moments are of order $\O(\delta s^2)$. Here $\rho(\Phi
\to \Phi';\delta s) = \prod_{\mu<\nu} \rho(\phi_{\mu\nu} \to
\phi_{\mu\nu}';s)$ denotes the probability of $\Phi =
\{\phi_{\mu\nu}\}_{\mu < \nu}$ moving to $\Phi' = \Phi + \delta \Phi$
in a time $\delta s$.

The above formulation of the motion is equivalent to considering the
probability distribution $P(\Phi;s)$ of finding the particles at
position $\Phi$ at time $s$, subject to some initial distribution
$P(\Phi;0)$ at time 0. $P(\Phi;s)$ satisfies the following
Fokker-Planck equation
\begin{equation}\label{Fokker-Planck equation}
\frac{\partial P(\Phi;s)}{\partial s} = \sum_{\mu < \nu} \frac{\partial^2 P(\Phi ; s)}{\partial \phi_{\mu\nu}^2}
\end{equation}
with periodic boundary conditions on the torus. Note that if
$P(\Phi;0) = \delta(\Phi - \Phi') = \prod_{\mu < \nu}
\delta(\phi_{\mu\nu} - \phi'_{\mu\nu})$, i.e. the particles are
conditioned to be at position $\Phi$ at time 0, then $P(\Phi';s) =
\rho(\Phi \to \Phi';s)$ is simply the transition probability. One may
solve this equation explicitly (see
e.g. \cite{Risken-1989,Gardiner-2009}), although we shall not need the
exact form of the solution here. In the limit of large times the
probability distribution satisfies
\[
\lim_{s \to \infty}P(\Phi ;s) = (2\pi)^{-N(N-1)/2},
\]
which corresponds to the stationary distribution of the UME - the
solution obtained once the left hand side of (\ref{Fokker-Planck
  equation}) is set to zero and appropriately normalised.

If the process is started from equilibrium, i.e. $P(\Phi ;0) =
(2\pi)^{-N(N-1)/2}$, then for all times $s$ we have $P(\Phi';s) =
P(\Phi ;0)$, where $\Phi$ and $\Phi'$ are related by $\rho(\Phi \to
\Phi';s)$. In the probability literature, the pair $(\Phi,\Phi')$ is
then termed an \emph{exchangeable pair}, since both $\Phi$ and $\Phi'$
have the same distribution (see e.g. \cite{Meckes-2009}).

\vspace{5pt}

With the same philosophy as used by Dyson \cite{Dyson-1962}, a change
$\delta \Phi$ in the matrix elements induces a corresponding change in
our traces of Chebyshev polynomials of $\delta F_n := F_n(\CM(\Phi'))
- F_n(\CM(\Phi))$, where $\Phi' = \Phi + \delta \Phi$, in time $\delta
s$. The key difference in the UME in comparison to the GUE, is that
the motion is not invariant under unitary transformations\footnote{See
  \cite{Joyner-2015} for further discussion on this point in the case
  of Bernoulli matrices.} and consequently the evolution of $F_n(\CM)$
is not closed (i.e. cannot be described entirely in terms of the
$F_n(\CM)$ themselves). Nevertheless we can estimate the remainder
terms using the combinatorial procedures outlined in the previous
section.

For large $N$ we will show in the following subsections that the drift
and diffusion coefficients associated to the motion of $F_n(\CM)$
approximately (in the probabilistic sense of Theorem \ref{Steins
  method theorem} and Theorem \ref{Wasserstein distance estimation}
below) satisfy
\begin{equation}\label{Drift hope}
\fl \lim_{\delta s \to 0}\frac{\E[\delta F_n |\Phi]}{\delta s} : = \lim_{\delta s \to 0}\frac{1}{\delta s}\int d\Phi (F_n(\CM') - F_n(\CM)) \rho(\Phi \to \Phi'; \delta s) \approx - nF_n(\CM)
\end{equation}
\begin{equation}\label{Diffusion hope}
\fl \lim_{\delta s \to 0}\frac{\E[\delta F_n^2 |\Phi]}{\delta s} : = \lim_{\delta s \to 0}\frac{1}{\delta s}\int d\Phi'(F_n(\CM') - F_n(\CM))^2 \rho(\Phi \to \Phi'; \delta s) \approx 2 \left(\frac{n^2}{4}\right),
\end{equation}
with cross terms $\lim_{\delta s \to 0}\E[\delta F_n \delta F_m |\Phi]
/ \delta s \approx 0$ and higher terms identically zero. This means
the process behaves in a similar way to an Ornstein-Uhlenbeck (OU)
process generated by the following Fokker-Planck equation
\begin{equation}\label{OU FP equation}
\frac{\partial Q(X;s)}{\partial s} = \sum_{n=3}^k \left[n\frac{\partial (X_n Q(X;s))}{\partial X_n} + \frac{n^2}{4} \frac{\partial^2 Q(X;s)}{\partial X_n^2}\right],
\end{equation}
where $X = (X_3,\ldots,X_n)$. Moreover, since $\lim_{s \to \infty}
P(\Phi;s)= (2\pi)^{-N(N-1)/2}$ is the stationary distribution of the
UME, the associated stationary distribution of $F(\CM)$ will, in turn,
be approximately equal to the stationary solution of (\ref{OU FP
  equation}) given by
\[
Q(X): = \lim_{s \to \infty} Q(X;s) = \prod_{n=3}^k \sqrt{\frac{2}{\pi n}} \exp\left\{-\frac{2 X^2}{n}\right\},
\]

Theorem \ref{Steins method theorem} below (due to [Meckes
  \cite{Meckes-2009}) makes
  this notion precise and for reasons of brevity we shall not repeat
  their proof here. However we would like to highlight that the result
  follows from the Taylor expansion of an appropriate observable
  $h(F(\CM))$, with $\delta h := h(F(\CM')) - h(F(\CM))$
\[
\E[\delta h|\Phi] =  \sum_{n=3}^k \E[\delta F_n | \Phi ] \frac{\partial h}{\partial F_n} + \frac{1}{2} \sum_{n,m=3}^k  \E[\delta F_n \delta F_m | \Phi ] \frac{\partial^2 h}{\partial F_n \partial F_m} + \ldots \ .
\]
Dividing through by a factor $\delta s$ and then taking the limit
$\delta s \to 0$ we see that from (\ref{Drift hope}) and
(\ref{Diffusion hope}) the combination of the first two terms are
(approximately) equal to $\A h(F(\CM))$, where
\[
\A   := \sum_{n=3}^k \frac{n^2}{4} \frac{\partial^2}{\partial x_n^2} - nx_n\frac{\partial }{\partial x_n}.
\]
Notice this is the adjoint operator to that in the right hand side of
(\ref{OU FP equation}). Stein's Lemma ensures that if for any
twice-differentiable test function $h$ we have $\Av{\A h(Z)}= 0$ then
$Z = (Z_3,\ldots,Z_k)$ must be a multi-dimensional Gaussian random
variable with $\Av{Z_n}=0$ and $\Av{Z_nZ_m} = n\delta_{nm}/4$. In our
case the average over the UME is not identically 0 but $\Av{\A
  h(F(\CM)} \approx 0$, which suggests that $F(\CM)$ is approximately
Gaussian. Stein's method (originally developed in order to provide an
alternative proof of the CLT \cite{Stein-1972}) then allows one to
estimate the distance between $F(\CM)$ and the Gaussian variable $Z$
in a suitable metric by bounding the variance of $\A h(F(\CM)$.

\begin{dfn}[Wasserstein distance]Let us denote $\mathcal{L} := \{f : \R^k \to \R : |f(x) - f(y)| \leq \| x-y \| \}$ to be the set of all Lipschitz continuous functions and $X,Y$ be two $k$-dimensional random variables, then the Wasserstein distance between $X$ and $Y$ is
\[
d_{\rm W}(X,Y) := \sup_{f \in \mathcal{L}}|\Av{f(X)} - \Av{f(Y)}|.
\]
\end{dfn}
The Wasserstein distance provides a particular way to measure the
distance between two probability distributions and it often emerges as
a natural distance when utilising Stein's method. Moreover, if one has
sequence of random variables $X_N$ in which $d_{\rm W}(X_N,Y) \to 0$
then this implies that $X_N \to Y$ in distribution.

The following theorem, utilising Stein's method, is due to Meckes. Note that their results are more general than the following statement but we adapt it to our setting for purposes of clarity.

\begin{theorem}[Meckes \cite{Meckes-2009}]\label{Steins method theorem} Let $\CM$ and $\CM'$ be two random matrices with the same probability distribution (they are an exchangeable pair) and related via some transition probability $\rho(\CM \to \CM' ;s)$. Let $X\equiv X(\CM) =(X_3(\CM),\ldots,X_k(\CM))$ and $X' \equiv X(\CM') = (X_3(\CM),\ldots,X_k(\CM'))$ be two $k-2$ dimensional random variables dependent of $\CM$ and $\CM'$.  If
\begin{enumerate}
\item \label{Thm Drift}
\[
 \lim_{\delta s \to 0}\frac{\E[\delta X_n |\CM]}{\delta s} = - nX_n(\CM) + R_n(\CM) \ ,
\]
\item \label{Thm Diffusion}
\[
 \lim_{\delta s \to 0}\frac{\E[\delta X_n \delta X_m |\CM]}{\delta s} = \frac{n^2}{2}\delta_{nm} + R_{nm}(\CM) \ ,
\]
\item \label{Thm remainder}
\[
 \lim_{\delta s \to 0}\frac{\E[ |\delta X_n \delta X_m \delta X_l | |\CM]}{\delta s} = 0 \ ,
\]
\end{enumerate}
for all $n,m,l=3,\ldots,k$ and $R_n(\CM)$ and $R_{nm}(\CM)$ are
(potentially) random variables depending on $\CM$, then
\begin{equation}\label{Thm final result}
d_{\rm W}(X,Z) \leq \frac{1}{3}\sum_{n=3}^k \Av{|R_n(\CM)|} + \frac{1}{9}\sqrt{\frac{2}{\pi}}\sum_{n,m=3}^k\Av{|R_{nm}(\CM)|},
\end{equation}
with $Z$ the multi-dimensional Gaussian random variable stated in Theorem \ref{Convergence theorem}.
\end{theorem}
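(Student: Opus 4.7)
The plan is to apply Stein's method via the generator approach for continuous-time exchangeable pairs, as developed by Meckes. The first ingredient is the characterization of the target Gaussian $Z$ in terms of the operator $\A$: Stein's lemma asserts that a random vector $Y$ is equal in law to $Z$ if and only if $\Av{\A h(Y)}=0$ for a sufficiently large class of smooth test functions $h$. Thus for any Lipschitz $f$ one has $\Av{f(X)}-\Av{f(Z)} = \Av{\A h_f(X)}$, where $h_f$ solves the Stein equation $\A h_f(x) = f(x)-\Av{f(Z)}$. The Wasserstein distance bound will follow by estimating the right-hand side uniformly in $f$ with $\|f\|_{\mathrm{Lip}}\le 1$.

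First I would construct $h_f$ from the semigroup $P_s$ generated by $\A$, using the integral representation
\[
h_f(x) = -\int_0^\infty \bigl( P_s f(x) - \Av{f(Z)}\bigr)\,ds.
\]
Because $\A$ is diagonal in the variables, $P_s$ factorises over coordinates and admits an explicit Mehler form; differentiating under the integral and using that $P_s$ contracts Lipschitz functions at rate $e^{-ns}$ in the $n$-th coordinate yields pointwise bounds $\|\partial_n h_f\|_\infty \le C_n \|f\|_{\mathrm{Lip}}$ and $\|\partial_n \partial_m h_f\|_\infty \le C_{nm}\|f\|_{\mathrm{Lip}}$. Carefully tracking these constants is what ultimately produces the explicit prefactors $1/3$ and $(1/9)\sqrt{2/\pi}$ appearing in (\ref{Thm final result}).

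Next I would exploit the exchangeability of $(\CM,\CM')$. Because $\CM$ and $\CM'$ have the same law, $\Av{g(X')-g(X)}=0$ for any nice $g$, and dividing by $\delta s$ and sending $\delta s\to 0$ yields $\Av{\mathcal{L}g(X)}=0$, where $\mathcal{L}$ denotes the infinitesimal generator of the exchangeable pair. Taylor expanding $g(X')-g(X)$ to second order and taking conditional expectations, hypotheses (\ref{Thm Drift}) and (\ref{Thm Diffusion}) identify
\[
\mathcal{L}h_f(X) = \A h_f(X) + \sum_{n=3}^k R_n(\CM)\,\partial_n h_f(X) + \frac{1}{2}\sum_{n,m=3}^k R_{nm}(\CM)\,\partial_n\partial_m h_f(X),
\]
while hypothesis (\ref{Thm remainder}) ensures that the cubic Taylor remainder vanishes in the limit $\delta s\to 0$. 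Taking expectations, combining with $\Av{\mathcal{L}h_f(X)}=0$ and the Stein equation, one obtains
\[
\Av{f(X)}-\Av{f(Z)} = -\Av{\sum_n R_n(\CM)\,\partial_n h_f(X) + \frac{1}{2}\sum_{n,m} R_{nm}(\CM)\,\partial_n\partial_m h_f(X)}.
\]

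The proof is completed by taking absolute values, inserting the pointwise derivative bounds from the first step with $\|f\|_{\mathrm{Lip}}\le 1$, and then taking the supremum over such $f$ to pass to the Wasserstein distance. The main technical obstacle lies in the Stein-equation step: producing the sharp constants $1/3$ and $(1/9)\sqrt{2/\pi}$ requires careful Mehler-kernel estimates on the OU semigroup and is genuinely the non-trivial part of the argument. The exchangeability computation, by contrast, is essentially algebraic once the three hypotheses (\ref{Thm Drift})--(\ref{Thm remainder}) are unpacked.
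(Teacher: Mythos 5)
Your outline is essentially the standard generator/exchangeable-pair proof of this result, which is precisely what the paper gestures at: the paper itself does not prove the theorem but imports it from Meckes \cite{Meckes-2009} (``for reasons of brevity we shall not repeat their proof here''), giving only the heuristic Taylor-expansion argument leading to the operator $\A$ and Stein's lemma. Your proposal fills in that sketch along the same lines as the cited source (Stein equation solved via the Ornstein--Uhlenbeck semigroup, exchangeability killing $\Av{\mathcal{L}h_f(X)}$, hypotheses (i)--(iii) isolating the remainders, Mehler-kernel derivative bounds producing the constants), so it is consistent with the paper's treatment rather than a genuinely different route.
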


We also comment that an alternative version of this theorem by D\"{o}bler and Stolz \cite{Dobler-2011} has also been used by Webb \cite{Webb-2015} to estimate the Wasserstein distance between the traces in the circular $\beta$-ensembles and Gaussian random variables.

\begin{theorem}\label{Wasserstein distance estimation}Let $\CM$ be distributed according to the UME (see Equation (\ref{1})) and let $F(\CM) = (F_3(\CM),\ldots,F_k(\CM))$ be defined as in (\ref{Centered Chebyshev}). Then, for $k$ fixed, we have
\[
d_{\rm W}(F(\CM),Z) = \O(N^{-1/2}),
\]
with $Z$ that of Theorem \ref{Convergence theorem}.
\end{theorem}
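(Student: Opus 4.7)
My plan is to apply Meckes' Theorem~\ref{Steins method theorem} with $X_n(\CM) = F_n(\CM)$ and exchangeable pair $(\CM,\CM')$ generated by running the phase Brownian motion of (\ref{Phase drift})--(\ref{Phase diffusion}) for time $\delta s$; stationarity of the UME under this motion guarantees that $(\CM,\CM')$ is exchangeable. The proof then reduces to checking the drift, diffusion and triple-product hypotheses with $\O(N^{-1/2})$ averaged remainders and inserting them into (\ref{Thm final result}).

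For the drift, I would write $\delta e^{i\Phi_w} = e^{i\Phi_w}(e^{i\delta\Phi_w}-1)$, Taylor expand in $\delta\Phi_w$, and use (\ref{Phase drift})--(\ref{Phase diffusion}) together with the independence of the phases $\phi_{\mu\nu}$. The linear terms drop out at order $\delta s$ and the quadratic ones collapse into the torus Laplacian applied to $F_n$, yielding
\[
\lim_{\delta s \to 0} \frac{\E[\delta F_n | \Phi]}{\delta s} = -\frac{1}{2(N-2)^{n/2}} \sum_{w \in \Lambda_n} n_w \, e^{i\Phi_w},
\]
where $n_w = \sum_{\mu < \nu} (k_+^{\mu\nu}(w) - k_-^{\mu\nu}(w))^2$ records the squared net edge usage of $w$. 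Walks supported on a single loop ($\beta = 1$ subgraph) with each edge used exactly once satisfy $n_w = n$ and reproduce $-n F_n(\CM)$ exactly; the remainder $R_n(\CM)$ is then supported on walks whose topology has $\beta \geq 2$, a set of size $\O(N^{n-1})$ rather than $\O(N^n)$. Combining $\Av{|R_n|} \leq \sqrt{\Av{R_n^2}}$ with the combinatorial counts of Section~\ref{Correlations of yn} applied to such pairs gives $\Av{R_n^2} = \O(N^{-1})$, so $\Av{|R_n|} = \O(N^{-1/2})$.

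The diffusion proceeds in parallel: the same expansion yields
\[
\lim_{\delta s \to 0} \frac{\E[\delta F_n \delta F_m | \Phi]}{\delta s} = -\frac{1}{2(N-2)^{(n+m)/2}} \sum_{w \in \Lambda_n, \, w' \in \Lambda_m} c_{w,w'} \, e^{i(\Phi_w + \Phi_{w'})},
\]
with $c_{w,w'}$ the corresponding bilinear form in the net edge usages of $w$ and $w'$. Averaging retains only pairs with $\Phi_w \equiv -\Phi_{w'}$ for every realization of the phases, precisely those enumerated in Section~\ref{Correlations of yn}. For $n = m$ the leading contribution comes from single-loop pairs with $w'$ tracing $w$ in reverse and evaluates to the deterministic constant $n^2/2$, consistent with the Ornstein--Uhlenbeck identity $2\lambda\sigma^2 = n^2/2$ for $\lambda = n$ and $\sigma^2 = n/4$. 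All other pairs (including every $n \neq m$ pair, which vanishes automatically when $n - m$ is odd) feed $R_{nm}$, and the same variance-type estimate bounds $\Av{|R_{nm}|} = \O(N^{-1/2})$. Hypothesis (iii) is immediate since each $\delta F_n$ is $\O(\sqrt{\delta s})$ in every $L^p$ norm, so triple conditional expectations are $\O(\delta s^{3/2})$ and vanish after dividing by $\delta s$.

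The main obstacle is verifying that the weights $(n - n_w)$ and $c_{w,w'}$ vanish on the dominant single-loop walks, so that only the $\beta \geq 2$ walks of Section~\ref{Correlations of yn} (which furnish the next-to-leading-order variance correction) contribute to the remainders. Since $k$ is fixed, these weights are uniformly bounded in $N$ and do not affect the order. Inserting the two $\O(N^{-1/2})$ bounds into (\ref{Thm final result}) then yields $d_{\rm W}(F(\CM), Z) = \O(N^{-1/2})$.
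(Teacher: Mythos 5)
Your proposal is correct and follows essentially the same route as the paper's proof: the exchangeable pair generated by the phase Brownian motion, Meckes' theorem, the Taylor expansion identifying the drift $-nF_n(\CM)$ and diffusion $\tfrac{n^2}{2}\delta_{nm}$ from single-loop walks, and Cauchy--Schwarz plus non-backtracking walk counting to bound $\Av{|R_n|}$ and $\Av{|R_{nm}|}$ before inserting them into (\ref{Thm final result}). The only (harmless) differences are that your drift-remainder bound $\O(N^{-1/2})$ is looser than the paper's $\O(N^{-1})$ (obtained by noting the dominant contributing walks have a doubly traversed edge and hence only $v=n-2$ vertices), which still suffices because the diffusion remainder already caps the rate at $\O(N^{-1/2})$, and that the drift remainder is really supported on walks with a repeated edge, i.e.\ $x_w\neq n$ (this excludes type-I $\beta=2$ walks traversing every edge once, whose weight vanishes, and includes multiply-wound single loops, which are too rare to matter), rather than on all $\beta\ge 2$ walks.
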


\begin{proof} In the following subsections we will show the remainders for our drift (\ref{Drift hope}) and diffusion (\ref{Diffusion hope}) terms satisfy $\Av{|R_n(\CM)|} = \O(N^{-1})$ and $\Av{|R_{nm}(\CM)|} = \O(N^{-\frac{1}{2}})$ respectively. Incorporating these estimates into the Wasserstein distance (\ref{Thm final result}) in Theorem \ref{Steins method theorem} then gives the result.
\end{proof}

We remark that a convergence rate of order $\O(N^{-\frac{1}{2}})$ is also found in \cite{Chatterjee-2007} for certain classes of Wigner matrices using the total-variation metric.

\subsection{Drift term}
Before proceeding, let us first introduce the following notation. Let $E := \{e = (\mu,\nu) : \mu < \nu\}$ be the set of directed edges on our graph such that $e(\sigma) = (\mu,\nu)$ if $\sigma=+$ and $(\nu,\mu)$ if $\sigma = -$. Then for a non-backtracking walk $w = (e_1(\sigma_1),e_2(\sigma_2),\ldots,e_n(\sigma_n))$ the total phase can be written as
\[
\Phi_w = \sum_{e \in w} \kappa^{(w)}_e \phi_e,
\]
where $\kappa^{(w)}_e = \#\{e(+) \in w\} - \#\{e(-) \in w\}$ counts the net number of traversals of the edge $e$ by $w$ (it may be positive, negative or zero). This means, using the properties of the motion (\ref{Phase drift}) and (\ref{Phase diffusion}), we have
\[
\E[\delta \Phi_w^2| \Phi] = \sum_{e,e' \in w} \kappa^{(w)}_e\kappa^{(w)}_{e'} \E[\delta \phi_e \delta \phi_{e'}|\Phi] = 2 \sum_{e \in w} (\kappa^{(w)}_e)^2 \delta s + \O(\delta s^2).
\]
Therefore, using the form of $F_n(\CM)$ from (\ref{Centered Chebyshev}) we have
\begin{eqnarray}
\fl \E[\delta F_n | \Phi] & = & \frac{1}{2(N-2)^{n/2}}\sum_{w \in \Lambda_n} \E[\exp\{i(\Phi_w + \delta \Phi_w)\} - \exp\{i\Phi_w\}|\Phi] \nonumber \\
\fl & = &  \frac{1}{2(N-2)^{n/2}}\sum_{w \in \Lambda_n} \E\left[\exp\{i\Phi_w\}(1 + i\delta \Phi_w - \frac{1}{2} \delta \Phi_w^2 + \ldots) - \exp\{i\Phi_w\} \bigg|\Phi\right]\nonumber \\
\fl & = &  -\frac{1}{2(N-2)^{n/2}}\sum_{w \in \Lambda_n} \exp\{i\Phi_w\} \frac{1}{2}\E[\delta \Phi_w^2 |\Phi] + \O(\delta s^2) \nonumber \\
\fl & = &  (- nF_n(\CM) + R_n(\CM) )\delta s + \O(\delta s^2),
\end{eqnarray}
where, writing $x_w := \sum_{e \in w}(\kappa^{(w)}_e)^2$ for simplicity, the remainder is given by
\[
\fl R_n(\CM) = \frac{1}{2(N-2)^{n/2}}\sum_{w \in \Lambda_n} \exp\{i\Phi_w\}(x_w -  n) = \frac{1}{2(N-2)^{n/2}}\sum_{w \in \Lambda'_n} \exp\{i\Phi_w\}x_w,
\]
and $\Lambda'_n : = \{ w \in \Lambda_n : x_w \neq n\}$. In particular, this excludes those $n$-periodic non-backtracking walks in which every edge is only traversed once. To estimate the value of this remainder we must compute
\begin{equation}\label{Drift remainder 1}
\Av{|R_n(\CM)|} \leq \sqrt{\Av{R_n(\CM)^2}} = \frac{1}{2}\sqrt{\sum_{w,w' \in \Lambda'_n} x_w x_{w'} \frac{\Av{\exp\{i(\Phi_w - \Phi_w')\}}}{(N-2)^n}}.
\end{equation}
By averaging over the phases we find the main contributions to (\ref{Drift remainder 1}) will come from pairs of walks in which $w = w'$. These, however, cannot be walks in which all the edges are distinct (as this would imply $x_w = n$). Therefore, the main contribution if from those $w$ containing precisely one edge that is traversed twice (once in each direction) and the remaining edges are connected to this edge by two loops in which every edge is traversed once. Taking $w'$ to be the same walk but in the opposite direction means we have, using the notation from Section \ref{umeexp}, $v = n - \beta +1 = n - 2$ vertices (note that we have effectively $\beta = 3$ loops since traversing an edge twice can be viewed as creating an additional loop). Following the arguments of Section \ref{umeexp} and Section \ref{Correlations of yn} this means $\sum_{w,w' \in \Lambda'_n} x_w x_{w'} \Av{\exp\{i(\Phi_w - \Phi_w')\}} =\O(N^{n-2})$ and thus $\Av{|R_n(\CM)|} = \O(N^{-1})$.

\subsection{Diffusion term}
We now show that the remainder for our diffusion term satisfies $\Av{|R_{nm}(\CM)|} = \O(N^{-1/2})$. To begin, using the notations above, we note that
\[
\fl
\E[\delta \Phi_w \delta \Phi_{w'} | \Phi] = \sum_{e \in w} \sum_{e' \in w'} \kappa^{(w)}_e \kappa^{(w')}_{e'} \E[\delta \phi_e \delta \phi_{e'}|\Phi] = 2\sum_e \kappa^{(w)}_e \kappa^{(w')}_{e} \delta s + \O(\delta s^2).
\]
Thus, writing $x_{w,w'} = \sum_e \kappa^{(w)}_e \kappa^{(w')}_{e}$ and taking the complex conjugate of $F_n(\CM)$ in the following (since it is real), we have for the diffusion term
\begin{eqnarray}
\fl \E[\delta F_n \delta F_m | \Phi] & = & \frac{1}{4(N-2)^{(n+m)/2}}\sum_{w \in \Lambda_n}\sum_{w' \in \Lambda_m} \exp\{i(\Phi_w - \Phi_{w'})\} \E[\delta \Phi_w \delta \Phi_{w'} |\Phi] \nonumber \\
\fl & =& \left(\frac{n^2}{2}\delta_{nm} + R_{nm}(\CM)\right)\delta s + \O(\delta s^2),
\end{eqnarray}
where
\begin{equation}\label{Diffusion remainder 1}
\fl R_{nm}(\CM) = \frac{1}{2(N-2)^{(n+m)/2}}\sum_{w \in \Lambda_n}\sum_{w' \in \Lambda_m} \exp\{i(\Phi_w - \Phi_{w'})\} x_{w,w'} - \frac{nm}{2} \delta_{nm}.
\end{equation}
In order to obtain an estimate for $\Av{|R_{nm}(\CM)|}$ we treat the cases $n = m$ and $n\neq m$ separately.

For $n=m$ we have
\[
\Av{|R_{nn}(\CM)|} \leq \frac{1}{2}\sqrt{\Av{\left(\sum_{w,w' \in \Lambda_n} \frac{\exp\{i(\Phi_w - \Phi_{w'})\} x_{w,w'}}{(N-2)^n} - n^2\right)^2}}
\]
Expanding out the brackets inside the square root above gives
\[
\fl
\sum_{w_1,w_2,w_3,w_4 \in \Lambda_n} \frac{ \Av{\exp\{i(\Phi_{w_1} - \Phi_{w_2} + \Phi_{w_3} - \Phi_{w_4})\}} x_{w_1,w_2}x_{w_3,w_4}}{(N-2)^{2n}}\]
\[
 - 2 n^2\sum_{w_1,w_2 \in \Lambda_n} \frac{ \Av{\exp\{i(\Phi_{w_1} - \Phi_{w_2})\}} x_{w_1,w_2}}{(N-2)^{n}} + n^4.
\]
Now, we first note that $x_{w,w'} = 0$ if $w$ and $w'$ do not share an
edge. Therefore the main contribution to the first sum in the above
comes from pairs of non-backtracking walks $w_1 = w_2$ and $w_3 = w_4$
that reside on disconnect subgraphs comprised of a single loop. As was
determined in (\ref{beta 1 contribution}), and using that $x_{w_1,w_1}
= n$ for such walks, means the first summation gives a contribution
$n^4 + \O(N^{-1})$ and the second summation $n^2 +
\O(N^{-1})$. Therefore, taking the square root, we have $\Av{|R_{nn}(\CM)|} = \O(N^{-1/2})$.

For $n \neq m$ we have
\[
\fl
\Av{|R_{nm}(\CM)|} \leq \sqrt{ \sum_{w_1,w_3 \in \Lambda_n}\sum_{w_2,w_4 \in \Lambda_m} \frac{ \Av{\exp\{i(\Phi_{w_1} - \Phi_{w_2} + \Phi_{w_3} - \Phi_{w_4})\}} x_{w_1,w_2}x_{w_3,w_4}}{(N-2)^{(n+m)}}}.
\]
Due to the presence of $x_{w_1,w_2}$ and $x_{w_3,w_4}$ it must be the case that for $\Av{\exp\{i(\Phi_{w_1} - \Phi_{w_2} + \Phi_{w_3} - \Phi_{w_4})\}} x_{w_1,w_2}x_{w_3,w_4}$ to be non-zero $w_1$ and $w_2$ must share at least one edge, and similarly for $w_3$ and $w_4$. Therefore, since $w_1$ and $w_2$ are of different lengths, those subgraphs supporting the walks $(w_1,\ldots,w_4)$ in which every edge is traversed twice must have at least three loops. The number of vertices in such a situation is therefore $v = n+m - 2$ and hence the contribution inside the square root is of order $\O(N^{-2})$, meaning $\Av{|R_{nm}(\CM)|} = \O(N^{-1})$.

\subsection{Remainder term}
In order to complete the necessary conditions for Theorem \ref{Wasserstein distance estimation} requires verifying that our motion satisfies Part (\ref{Thm remainder}) in Theorem \ref{Convergence theorem}. This is easily shown, since
\[
 \E[|\delta F_n \delta F_m\delta F_l | |\CM] \leq \sqrt{\E[(\delta F_n \delta F_m\delta F_l )^2 |\CM]} = \O(\delta s^{3/2}).
\]
This comes from noting that the main contribution to $\E[(\delta F_n
  \delta F_m\delta F_l )^2 |\CM]$ will come from terms in which the
same edges appears six times. Therefore, taking this conditional
expectation we find $\E[\delta \phi_e^6|\Phi] = \O(\delta s^3)$,
because $\delta \phi_e$ behaves like a Gaussian in the small time limit with
variance $\sigma^2 \propto \delta s$. Thus, dividing through by
$\delta s$ and taking the limit we get the result
\[
\lim_{\delta s \to 0} \frac{\E[|\delta F_n \delta F_m\delta F_l | |\CM]}{\delta s} = 0,
\]
as required.

\section{Conclusions}

In this paper we have studied the spectral density and moments of the
UME for large matrix dimension $N$. The UME is a paradigmatic example
of a Wigner ensemble where the distribution of the matrix elements is
not Gaussian. The spectral properties of the UME agree with those of
the GUE in the limit $N \to \infty$. We have focused attention on
contributions of next order(s) in an asymptotic expansion in $1 / N$.
These are of interest in their own right, possibly showing deviations
from universality. Their study also casts light on the power of the
approaches used for their study. We have used three very different
approaches: The supersymmetry approach, the graph-theoretical
approach, and the Brownian-motion approach.

Using the supersymmetry approach and the saddle-point approximation we
have shown that the leading $1 / N$ corrections to the spectral
density of the UME are the same as for the GUE and account for the
oscillations that feature so prominently in the numerical data. We
have not been able to push the supersymmetry approach beyond that
point. The graph-theoretical approach has yielded leading $1 / N$
corrections to the mean values of Chebyshev moments and their
covariances. These differ from GUE values. That information did not
suffice, however, to account for the oscillations in the spectral
density. Progress might be possible via more complicated combinatorial
computations. These would go beyond the scope of the present
paper. Finally we used a Brownian motion approach combined with
Stein's method to analyse the fluctuations of the spectral moments. In
particular the combinatorial ideas outlined in Section \ref{Graph
  theory approach} allowed us to estimate the error terms and provide
bounds on the rate of convergence to a multi-dimensional Gaussian in
the large $N$ limit.

One issue worth mentioning, that has not been addressed here, is that
due to gauge invariance the spectrum of any matrix in the UME depends
only on the magnetic fluxes $\Phi_c = \sum_{i} \phi_{v_i,v_{i+1}}$ on
the fundamental cycles $c$. The number of such fundamental cycles is
$\beta = \frac{1}{2}N(N-1) - N +1$, which is less than the number of
independent phases by $N-1$. We did not attempt to make use of the
gauge invariance or the freedom in choosing the independent cycles.

\section*{Acknowledgements}
CHJ and US would like to thank Professor Sasha Sodin for many
discussions and illuminating comments. CHJ is also grateful to the
Leverhulme Trust (ECF-2014-448) for financial support.

\section*{References}

\end{document}